\documentclass[a4paper]{amsart}
\usepackage{amsmath,amssymb,amsthm}
\usepackage{float}
\usepackage{tikz}
\usepackage[hidelinks]{hyperref}
\usepackage{enumerate}
\usepackage{graphicx}
\usetikzlibrary{calc}
\newtheorem{definition}{Definition}
\newtheorem{theorem}{Theorem}
\newtheorem{lemma}{Lemma}
\newtheorem{proposition}{Proposition}
\newtheorem{remark}{Remark}
\newtheorem{note}{Note}
\newtheorem{example}{Example (Non-empty blind spot)}
\newtheorem{procedure}{Procedure}
\newtheorem{assumption}{Assumption}
\newtheorem{corollary}{Corollary}
\usepackage{fontspec}
\usepackage{unicode-math}
\usepackage[a4paper, margin = 32.25mm]{geometry}

\begin{document}

\title{Detection coherence of tests}
\author{M. Grend\'ar}
\address{1~Laboratory of Bioinformatics and Biostatistics, Biomedical Centre Martin,
Jessenius Faculty of Medicine, Comenius University in Bratislava, Slovakia.
2~Laboratory of Theoretical Methods, Institute of Measurement Science, Slovak Academy of Sciences, Bratislava, Slovakia.
3~Bioptic Laboratory Ltd., Plzen, Czech Republic.}
\keywords{hypothesis testing, blind spot, detection coherence}

\begin{abstract}
There exist tests calibrated under a null narrower than the one implied by 
their test statistic -- the detection-null set. The part of the detection-null set 
not in the null is the test's blind spot. A framework for assessing detection 
coherence -- whether a test's blind spot is empty -- is introduced, built on new 
concepts of calibration statistic, detector, detection functional, detection-null set, 
and blind spot. It is demonstrated that the Wilcoxon--Mann--Whitney, Kruskal--Wallis, 
Friedman, and logrank tests are detection incoherent as unrestricted tests. Detection 
incoherent tests may become coherent under restrictions that make the blind spot empty, 
though domain restriction is not a reliable route to coherence in practice. The 
Kolmogorov--Smirnov and Zaremba tests, and the Maximum Mean Discrepancy test with a 
characteristic kernel, are shown to be universally detection coherent. Due to the blind spot, 
a detection incoherent test misses discoveries when non-rejecting and yields spurious 
discoveries when rejecting, in both cases regardless of sample size. Detection incoherent 
tests should be abandoned.
\end{abstract}

\maketitle

%
\section{Introduction}

There exist tests calibrated under a null
narrower than the one their test statistic implies.
The Wilcoxon--Mann--Whitney (WMW) test
provides a simple illustration of such a detection incoherent test:
it is calibrated under $\mathcal{H}_0 = \{(F_1, F_2)\mathpunct{:} F_1 = F_2\}$,
but its detector, the empirical AUC, $\mathrm{eAUC} = U/(n_1 n_2)$, 
implies calibration under the broader detection-null set 
$\mathcal{D}_0 = \{(F_1, F_2)\mathpunct{:} \mathrm{AUC} = 1/2\}$.
Since the test's blind spot $\mathcal{S}_{\mathcal{P}} = \mathcal{D}_0
\setminus \mathcal{H}_0 = \{(F_1, F_2)\mathpunct{:} \mathrm{AUC} = 1/2\}\setminus \{(F_1, F_2)\mathpunct{:} F_1 = F_2\}$ is non-empty, 
the test is detection incoherent. 
Asymptotically, a non-significant result from the WMW
test can therefore come either from the valid null or
from the blind spot -- a flaw of the WMW test recognized
by Gnedenko~\cite{gnedenko1958} in~1958.

This paper introduces a framework, \emph{detection coherence}, for assessing 
whether a test has a blind spot. The framework is built on new concepts: calibration 
statistic, detector, detection functional, detection-null set,
and blind spot. 
The detector $\hat D_n$ is the component of the calibration statistic $T_n$
whose convergence in probability to the detection functional, 
$\hat{D}_n \xrightarrow{p} \theta(P)$ for $P \in \mathcal{H}_0$,
drives $T_n$'s convergence in distribution, $T_n \xrightarrow{d} \mathcal{L}_0$. 
The detection-null set is $\mathcal{D}_0 = \{P\mathpunct{:} \theta(P) = \theta^*\}$,
where $\theta^* = \theta(P)$ for any $P \in \mathcal{H}_0$ is the null value
of the detection functional.
A test is detection incoherent when the detection-null set is larger than the null, 
$\mathcal{H}_0\subsetneq \mathcal{D}_0$. Its blind spot, 
$\mathcal{S}_{\mathcal P} = \mathcal{D}_0\setminus\mathcal{H}_0$, the set of
distributions belonging to the detection-null but not to the null, is non-empty.

It is demonstrated that the four major nonparametric tests
(the Wilcoxon--Mann--Whitney (WMW), Kruskal--Wallis (KW), Friedman, and
logrank tests) are detection incoherent as unrestricted tests.

A detection incoherent test may become detection
coherent when restricted to a domain that excludes its blind spot. Domain restrictions
under which each test achieves coherence are identified:
monotone likelihood ratio for WMW, ordered hazards for logrank,
location-shift for KW and block-consistent stochastic ordering for Friedman.
These restrictions are sufficient
conditions for coherence, not reliable remedies for incoherence in practice
-- a distinction developed in Section~\ref{Sect:discussion}.

Zaremba's~\cite{zaremba1962} test that calibrates the WMW statistic
under $\mathcal{H}_0 = \{(F_1, F_2)\mathpunct{:} \mathrm{AUC}=1/2\}$ 
instead of WMW's $\mathcal{H}_0 = \{(F_1, F_2)\mathpunct{:} F_1=F_2\}$, 
the Kolmogorov--Smirnov~(KS) test, and the Maximum Mean Discrepancy (MMD) test with
a characteristic kernel are shown to be universally detection coherent.

Due to the blind spot, a detection incoherent test misses 
discoveries when non-rejecting (a~false negative from the blind spot) 
and yields spurious discoveries when rejecting (a true positive from the blind spot), 
in both cases regardless of sample size. Detection incoherent tests should be abandoned.

The paper is organized as follows. New concepts
(domain, unrestricted and restricted test, calibration
statistic, detector, detection functional,
detection-null value, detection-null
set, detection set, and blind spot) developed to formulate detection
coherence with respect to a domain and universal
detection coherence, are defined in
Section~\ref{Sect:definitions},
together with a result identifying the blind spot,
for two-sided alternatives, with the set of
distributions against which power does not grow
with sample size. A procedure for
systematic assessment of detection coherence is
presented in Section~\ref{Sect:recipe}. Detection
incoherence of the unrestricted WMW,
KW, Friedman, and logrank tests is
demonstrated in Section~\ref{Sect:major}. Sufficient domain
restrictions under which each of the four tests
achieves detection coherence are established in
Section~\ref{Sect:restriction}. Universal detection
coherence of the KS, Zaremba and MMD (with a characteristic kernel) tests is proved in
Section~\ref{Sect:two}.  
Section~\ref{Sect:discussion}
situates the framework relative to classical nonparametric
theory, develops the consequences of detection incoherence
for inferences,  -- missed and spurious discoveries -- 
then examines the unreliability of domain restriction as a remedy in
practice, and discusses the framework's scope.
Section~\ref{sec:history} places the
framework in historical context, tracing prior work on the WMW
blind spot and its coherent reformulation.
Verification that the
detectors extracted for the four tests satisfy the
consistency and minimality conditions of
Definition~\ref{Def:detection} is provided in
Appendix~\ref{App:detector}.

%
\section{Definitions and Basic Properties}\label{Sect:definitions}

Let $X_1, \ldots, X_n$ be observations generated from a distribution
$P \in \mathcal P$, where $\mathcal P$ is the distribution space.

\begin{definition}[Domain]
A subset $\mathcal R \subseteq \mathcal P$ is called a domain.
\end{definition}

\begin{definition}[Test, null hypothesis, alternative, null calibration]\label{Def:nullhyp}
A statistical test defined over a domain $\mathcal R$ is characterized
by a null set $\mathcal H_0 \subseteq \mathcal R$, which partitions
the domain as $\mathcal R = \mathcal H_0 \cup \mathcal H_0^c$, where
$\mathcal H_0^c = \mathcal R \setminus \mathcal H_0$ is the alternative.

The test employs a calibration statistic $T_n = T_n(X_1,\ldots,X_n)$
whose reference distribution~$\mathcal{L}_0$ is derived under distributions
$P \in \mathcal H_0$ in order to construct a rejection rule $\varphi_n$,
where $\varphi_n = 1$ indicates rejection of $\mathcal H_0$ and
$\varphi_n = 0$ indicates non-rejection. This is referred to as
null calibration.

We refer to $\mathcal H_0$ as the null hypothesis over domain
$\mathcal R$, and denote it also $H_0$ where appropriate.
\end{definition}

\begin{definition}[Unrestricted and domain-restricted tests]
A test defined over the full distribution space $\mathcal P$ is called
an unrestricted test. A test defined over a proper subset
$\mathcal R \subsetneq \mathcal P$ is called a domain-restricted test.
\end{definition}

\begin{definition}[Detector and detection functional]
\label{Def:detection}
Let $T_n$ be the calibration statistic of a test defined over
domain $\mathcal{R} \subseteq \mathcal{P}$. Suppose the calibration 
statistic $T_n$ satisfies
$$
T_n \xrightarrow{d} \mathcal{L}_0 \quad \text{for all } P \in \mathcal{H}_0,
$$
and admits the representation
\begin{equation}\label{eq:representation}
T_n = \Phi(\hat{D}_n, \hat{\eta}_n)
\end{equation}
where 
$\hat D_n$ is a scalar or vector-valued statistic,
$\hat\eta_n$ is a vector of nuisance estimators, and 
$\Phi$ is a known functional.

The {detector} $\hat D_n$ is the component of~\eqref{eq:representation},
satisfying:
\begin{enumerate}[(i)]
\item \textit{Consistency:} $\hat{D}_n \xrightarrow{p} \theta(P)$
for some functional $\theta(P)$, for all $P\in\mathcal R$; and 
\item \textit{Minimality:} no non-injective function $f\mathpunct{:}\Theta\to\mathbb{R}$ 
(or $\mathbb{R}^d$, for vector-valued $\theta$) satisfies $f(\hat{D}_n)\xrightarrow{p}\theta(P)$
for all $P\in\mathcal R$, 
where $\Theta = \{\theta(P)\mathpunct{:} P\in\mathcal R\}$ is the attainable range 
of the detection functional.
\end{enumerate}
The functional $\theta(P)$ is called the {detection
functional}. The calibration statistic $T_n = \Phi(\hat{D}_n, \hat{\eta}_n)$
is the scalar functional of the detector
used for decision making.
\end{definition}

The reader primarily
interested in the coherence framework may proceed
directly to Definition~\ref{Def:nullvalue}.

%
\begin{remark}[Digression: extraction of the detector]
For calibration statistics admitting a scaling-normalisation-nuisance structure,
the extraction procedure is a backward engineering operation:
given $T_n \xrightarrow{d} \mathcal{L}_0$
under $H_0$, it recovers the underlying detector
$\hat{D}_n$ whose convergence in probability to the
detection functional $\theta(P)$ drives that convergence
in distribution. For ratio statistics this backward
engineering is mechanical; for quadratic form statistics
it requires identifying the minimal vector pre-image,
guided by the consistency check.
\end{remark}

\begin{definition}[Extraction procedure]\label{Def:extraction}
For calibration statistics admitting the asymptotic
expansion
\begin{equation}\label{eq:expansion}
T_n = a_n\Psi(\hat{D}_n, \hat{\eta}_n) + o_p(1),
\end{equation}
where $a_n$ is a scaling sequence, $\Psi$ is a known
function, $\hat{D}_n$ is a scalar or vector-valued
statistic, and $\hat{\eta}_n$ is a vector of nuisance
estimators, the detector $\hat{D}_n$ is extracted by
the following steps:

{Step 1 (Remove scaling):} Remove the scaling sequence
$a_n$, typically $a_n = \sqrt{n}$ or $a_n = n$.

{Step 2 (Remove variance normalization):} Remove consistent
estimators of asymptotic standard deviations or variance matrices.

{Step 3 (Remove nuisance):} Remove components $\hat{\eta}_n$
whose probability limits are constant on~$\mathcal{H}_0$.

{Step 4 (Minimal consistent estimator):} The remaining
quantity $\hat{D}_n$ (scalar or vector-valued) satisfying
conditions (i) and (ii) of Definition~\ref{Def:detection} is the
detector.
\end{definition}

\begin{remark}
The extraction procedure provides a systematic heuristic for
identifying candidate detectors. Whether the extracted quantity
$\hat{D}_n$ satisfies the consistency and minimality conditions
of Definition~\ref{Def:detection} must be verified for each
specific test, typically through existing consistency results
in the literature; see Appendix~\ref{App:detector} for verification
for the four tests examined here.
\end{remark}

\begin{remark}
For the four tests examined in Section~\ref{Sect:major}, the
detector $\hat{D}_n$ and the functional $\Phi$ are as
follows:

\renewcommand{\arraystretch}{1.3}
\begin{table}[H]
\centering
\begin{tabular}{lll}
\hline
{Test} & {Detector} $\hat{D}_n$ &
{Calibration statistic} $T_n = \Phi(\hat{D}_n, \hat\eta_n)$  \\
\hline
WMW & eAUC (scalar) &
  $\sqrt{n}(\hat{D}_n - 1/2)/\hat{\sigma}$ \\
KW & $(\bar{R}_1/(n+1),\ldots,\bar{R}_K/(n+1))$ (vector) &
  $\frac{12}{n(n+1)}\sum_k n_k(\hat{D}_{nk} - (n+1)/2)^2$ \\
Friedman & $(\bar{R}_{\cdot 1},\ldots,\bar{R}_{\cdot k})$ (vector) &
  $\frac{12n}{k(k+1)}\sum_j(\hat{D}_{nj} - (k+1)/2)^2$ \\
Logrank & $(O-E)/n$ (scalar) &
  $\hat{D}_n / \sqrt{V/n}$ \\
\hline
\end{tabular}
\caption{Detectors $\hat{D}_n$ and calibration statistics $T_n = \Phi(\hat{D}_n, \hat\eta_n)$
for the four detection-incoherent tests.}
\label{tab:detectors}
\end{table}

\renewcommand{\arraystretch}{1}

For WMW and logrank, $\Phi$ is injective (the detector
is scalar and the calibration statistic is a monotone
transformation of it). For KW and Friedman, $\Phi$ is a quadratic
form, non-injective, mapping the vector-valued detector to a
scalar. This non-injectivity of $\Phi$ does not affect minimality
of the detector: minimality concerns whether the detector
$\hat D_n$ itself can be replaced by a coarser statistic while
still recovering $\theta(P)$, whereas $\Phi$ maps $\hat D_n$ to
the calibration statistic $T_n$, whose role is calibration under
$\mathcal H_0$, not recovery of $\theta(P)$.
\end{remark}

\begin{remark}\label{Rem:monotone}
The calibration statistic is defined up to monotone rescaling:
if $T_n$ is a calibration statistic, so is any monotone
transformation $g(T_n)$, since both define identical rejection
regions. The detector, by contrast, is uniquely determined
as the minimal consistent estimator of the detection functional.
For instance, for the KW test both the calibration statistic
of Table~\ref{tab:detectors}, which we denote $H$, and
$\hat{\varepsilon}^2 = H/(n-1)$ are valid calibration statistics,
while the detector is uniquely the vector
$(\bar{R}_1/(n+1),\ldots,\bar{R}_K/(n+1))$.
\end{remark}

%
%
\begin{definition}[Detection-null value]\label{Def:nullvalue}
Assume $\theta(P) = \theta^*$ for all $P \in \mathcal{H}_0$, so
that the detection functional takes a common value over the null.
The common value $\theta^*$ is called the detection-null value.
\end{definition}

\begin{definition}[Detection-null set and detection set]
\label{Def:detectionset}
The detection-null set is
$\mathcal{D}_0 = \{P \in \mathcal{P}\mathpunct{:} \theta(P) = \theta^*\}$.
The detection set is
$\mathcal{D} = \{P \in \mathcal{P}\mathpunct{:} \theta(P) \neq \theta^*\}$.
\end{definition}

\begin{note}
The detection-null value $\theta^*$ is the sole point of contact
between $\mathcal{H}_0$ and $\mathcal{D}_0$. $\mathcal{H}_0$ is
the hypothesis under study, while $\mathcal{D}_0 =
\theta^{-1}(\{\theta^*\})$ is determined by the detector, as the
level set of~$\theta$ at the single value $\theta^*$ it takes on
$\mathcal{H}_0$. The two sets arise from different constructions,
and $\theta^*$ is the value through which they are compared.
\end{note}

%
%
\begin{definition}[Blind spot]\label{Def:blind}
The unrestricted blind spot of a test is
$$
\mathcal{S}_{\mathcal{P}} = \mathcal{D}_0 \setminus \mathcal{H}_0,
$$
the set of distributions that belong to the detection-null set but
not the null hypothesis set. The domain-restricted blind spot is
$$
\mathcal{S}_{\mathcal{R}} =
(\mathcal{D}_0 \setminus \mathcal{H}_0) \cap \mathcal{R}.
$$
\end{definition}

\begin{proposition}[Equivalent forms of the blind spot]\label{prop:blindequiv}
Let $\mathcal{H}_0^c$ be the two-sided alternative of $\mathcal{H}_0$,
and let $\mathcal{D}$ be the two-sided complement of $\mathcal{D}_0$.
Then
$$
\mathcal{S}_{\mathcal{P}} = \mathcal{D}_0 \setminus \mathcal{H}_0
= \mathcal{H}_0^c \cap \mathcal{D}_0
= \mathcal{H}_0^c \setminus \mathcal{D}.
$$
\end{proposition}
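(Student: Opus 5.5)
The plan is a direct set-theoretic verification. I work in the unrestricted setting, so the ambient set is $\mathcal{P}$. Here $\mathcal{H}_0^c = \mathcal{P}\setminus\mathcal{H}_0$ is the two-sided alternative, and by Definition~\ref{Def:detectionset} we have $\mathcal{D} = \{P\in\mathcal{P}\mathpunct{:}\theta(P)\neq\theta^*\} = \mathcal{P}\setminus\mathcal{D}_0$. The word \emph{two-sided} matters only because it guarantees that $\mathcal{D}$ is the full complement of $\mathcal{D}_0$. A one-sided detection set, say $\{\theta(P)>\theta^*\}$, would not be, and the last equality would then fail.

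\textbf{First equality.} The identity $\mathcal{S}_{\mathcal{P}} = \mathcal{D}_0\setminus\mathcal{H}_0$ is just Definition~\ref{Def:blind}.

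\textbf{Second equality.} I write $\mathcal{D}_0\setminus\mathcal{H}_0 = \mathcal{D}_0\cap(\mathcal{P}\setminus\mathcal{H}_0) = \mathcal{D}_0\cap\mathcal{H}_0^c$. This uses only $\mathcal{D}_0\subseteq\mathcal{P}$.

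\textbf{Third equality.} Since $\mathcal{D}_0 = \mathcal{P}\setminus\mathcal{D}$, we get
$$\mathcal{H}_0^c\cap\mathcal{D}_0 = \mathcal{H}_0^c\cap(\mathcal{P}\setminus\mathcal{D}) = \mathcal{H}_0^c\setminus\mathcal{D}.$$
Elementwise, $P$ lies in the right-hand side if and only if $P\notin\mathcal{H}_0$ and $\theta(P)=\theta^*$. This is exactly membership in $\mathcal{D}_0\setminus\mathcal{H}_0$.

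\textbf{Main obstacle.} The only delicate point is bookkeeping of ambient sets. In Definition~\ref{Def:nullhyp} the alternative is taken relative to the domain $\mathcal{R}$, whereas $\mathcal{D}_0$ and $\mathcal{D}$ are defined inside $\mathcal{P}$. I will therefore state explicitly that here $\mathcal{R}=\mathcal{P}$, as it should be for the unrestricted blind spot.

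It is also worth recording the restricted version, which follows by intersecting every set with $\mathcal{R}$. In that case $\mathcal{S}_{\mathcal{R}} = (\mathcal{H}_0^c\cap\mathcal{R})\cap\mathcal{D}_0 = (\mathcal{H}_0^c\cap\mathcal{R})\setminus\mathcal{D}$.

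Neither the well-definedness of $\theta^*$ (Definition~\ref{Def:nullvalue}) nor any asymptotic property of the detector enters the argument. The proposition is purely a set identity.
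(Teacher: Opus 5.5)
Your proposal is correct and follows the paper's route: a direct set-theoretic verification that rewrites the difference as an intersection with the complement of $\mathcal{H}_0$, then uses $\mathcal{D}_0=\mathcal{P}\setminus\mathcal{D}$. The paper also cites $\mathcal{H}_0\subseteq\mathcal{D}_0$ to justify the second equality, which, as you note, is not needed; and your explicit choice $\mathcal{R}=\mathcal{P}$ as the ambient set for $\mathcal{H}_0^c$ settles a point the paper leaves implicit.
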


\begin{proof}
By Definition~\ref{Def:nullvalue}, $\theta(P) = \theta^*$ for every
$P \in \mathcal{H}_0$, so $\mathcal{H}_0 \subseteq \mathcal{D}_0$.
Hence removing $\mathcal{H}_0$ from $\mathcal{D}_0$ is the same as
intersecting $\mathcal{D}_0$ with the complement of $\mathcal{H}_0$;
$\mathcal{D}_0 \setminus \mathcal{H}_0 = \mathcal{D}_0 \cap \mathcal{H}_0^c
= \mathcal{H}_0^c \cap \mathcal{D}_0$. Since $\mathcal{D}$ is the
complement of $\mathcal{D}_0$, $\mathcal{H}_0^c \cap \mathcal{D}_0
= \mathcal{H}_0^c \setminus \mathcal{D}$.
\end{proof}

\begin{corollary}[Restricted blind spot]\label{cor:blindequivR}
For any domain $\mathcal{R} \subseteq \mathcal{P}$,
$$
\mathcal{S}_{\mathcal{R}} = (\mathcal{D}_0 \setminus \mathcal{H}_0) \cap \mathcal{R}
= \mathcal{H}_0^c \cap \mathcal{D}_0 \cap \mathcal{R}
= (\mathcal{H}_0^c \setminus \mathcal{D}) \cap \mathcal{R}.
$$
\end{corollary}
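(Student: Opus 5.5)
The plan is to intersect the three equal sets of Proposition~\ref{prop:blindequiv} with $\mathcal{R}$. No new set-theoretic work is needed beyond that. The first equality, $\mathcal{S}_{\mathcal{R}} = (\mathcal{D}_0 \setminus \mathcal{H}_0) \cap \mathcal{R}$, is simply the definition of the domain-restricted blind spot in Definition~\ref{Def:blind}. So the content lies in the remaining two equalities.

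I would start from the identities
\[
\mathcal{D}_0 \setminus \mathcal{H}_0 = \mathcal{H}_0^c \cap \mathcal{D}_0 = \mathcal{H}_0^c \setminus \mathcal{D},
\]
which Proposition~\ref{prop:blindequiv} establishes using $\mathcal{H}_0 \subseteq \mathcal{D}_0$ (Definition~\ref{Def:nullvalue}) and the fact that $\mathcal{D}$ is the complement of $\mathcal{D}_0$ in $\mathcal{P}$. Intersecting equal sets with the same set $\mathcal{R}$ preserves equality. This gives
\[
(\mathcal{D}_0 \setminus \mathcal{H}_0) \cap \mathcal{R} = (\mathcal{H}_0^c \cap \mathcal{D}_0) \cap \mathcal{R} = (\mathcal{H}_0^c \setminus \mathcal{D}) \cap \mathcal{R}.
\]
Associativity of intersection then lets me drop the parentheses in the middle term, yielding $\mathcal{H}_0^c \cap \mathcal{D}_0 \cap \mathcal{R}$.

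The one point needing care is which complement $\mathcal{H}_0^c$ denotes. Definition~\ref{Def:nullhyp} takes it relative to $\mathcal{R}$, whereas Proposition~\ref{prop:blindequiv} uses the unrestricted version, relative to $\mathcal{P}$. I will read $\mathcal{H}_0^c$ in Proposition~\ref{prop:blindequiv} as $\mathcal{P} \setminus \mathcal{H}_0$, and observe that
\[
(\mathcal{P}\setminus\mathcal{H}_0) \cap \mathcal{R} = \mathcal{R}\setminus\mathcal{H}_0 .
\]
Hence after intersecting with $\mathcal{R}$, both readings of $\mathcal{H}_0^c$ give the same set, and the corollary holds under either convention.

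Similarly, $\mathcal{D}$ and $\mathcal{D}_0$ are defined in Definition~\ref{Def:detectionset} as subsets of $\mathcal{P}$, so they partition $\mathcal{P}$ and hence also partition $\mathcal{R}$ after intersection. This complementarity is what the third equality uses. I do not expect a real obstacle; the only risk is this bookkeeping of ambient spaces, which the observation above resolves.
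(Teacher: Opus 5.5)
Your proof is correct and is the paper's own argument: intersect each of the three equal sets of Proposition~\ref{prop:blindequiv} with $\mathcal{R}$. Your extra observation that $(\mathcal{P}\setminus\mathcal{H}_0)\cap\mathcal{R} = \mathcal{R}\setminus\mathcal{H}_0$ makes the result hold under either convention for $\mathcal{H}_0^c$ (relative to $\mathcal{R}$ in Definition~\ref{Def:nullhyp}, or to $\mathcal{P}$ in Proposition~\ref{prop:blindequiv}). The paper leaves this point implicit.
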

\begin{proof}
Intersect each expression in Proposition~\ref{prop:blindequiv} with $\mathcal{R}$.
\end{proof}

%
Proposition~\ref{prop:blindequiv} compares two partitions of
$\mathcal{P}$: the hypothesis partition $\{\mathcal{H}_0, \mathcal{H}_0^c\}$
and the detection partition $\{\mathcal{D}_0, \mathcal{D}\}$. The
relationship between the two partitions is illustrated
in Figure~\ref{fig:partitions}: $\mathcal{H}_0 \subsetneq \mathcal{D}_0$
for a detection-incoherent test, with the blind spot
$\mathcal{S}_{\mathcal{P}} = \mathcal{D}_0 \setminus \mathcal{H}_0$.

%
%
\begin{figure}[H] 
\centering
\scalebox{0.55}{
\begin{tikzpicture}
\begin{scope}
\clip (0,0) circle (3cm);

\fill[gray!25] (-3,-3) rectangle (-0.7,3);

\fill[gray!10]
(-0.7,-3) rectangle (3,3);

\fill[gray!45]
(-0.7, 1.4)
.. controls (0.2, 1.0) and (0.7, 0.2) .. (0.5, -1.4)
-- (3,-3) -- (3,3) -- (-0.7,3) -- cycle;

\end{scope}

\draw[thick] (0,0) circle (3cm);

\draw[thick] (-0.7,-3) -- (-0.7,3);

\draw[thick, dashed, line width=1.2pt]
(-0.7, 1.4)
.. controls (0.2, 1.0) and (0.7, 0.2) .. (0.5, -1.4);

\draw[thick] (0,0) circle (3cm);


\node at (-1.7, 0.5)
{\fontsize{13}{16}\selectfont$\mathcal{H}_0$};
\node[font=\footnotesize, gray!70!black] at (-1.7, -0.1)
{\fontsize{11}{14}\selectfont$\mathcal{H}_0 \subset \mathcal{D}_0$};
%
\node[font=\normalsize] at (0.365, -1.65) 
{\fontsize{13}{16}\selectfont$\mathcal{S}_{\mathcal P}$};
\node[font=\normalsize] at (0.365, -2.05) 
{\fontsize{13}{16}\selectfont blind spot};
\node[font=\small] at (0.365, -2.5) 
{\fontsize{11}{14}\selectfont$\mathcal{D}_0  \setminus \mathcal{H}_0$};

\node[font=\Large] at (1.8, 1.8) 
{\fontsize{13}{16}\selectfont $\mathcal{D}$};

\node[font=\small] at (2.65, 2.65) 
{\fontsize{13}{16}\selectfont $\mathcal{P}$};

\node[font=\normalsize, align=center] (D0label) at (-2.15, 3.7)  
{\fontsize{13}{16}\selectfont$\mathcal{D}_0$};  

\draw[->] (-2.1, 3.42) -- (-1.6, 2.25);  

\draw[->] (-1.9, 3.42) -- (0.0, 0.4);

\node[font=\normalsize] at (1.15, -3.5)
{\fontsize{13}{16}\selectfont $\mathcal{H}_0^c$};
\draw[->] (1.15,-3.35) -- (1.15,-2.8);

\end{tikzpicture}
}
\caption{The two partitions of $\mathcal{P}$. The hypothesis
partition $\{\mathcal{H}_0, \mathcal{H}_0^c\}$ is separated
by the vertical line; the detection partition
$\{\mathcal{D}_0, \mathcal{D}\}$ by the dashed arc.
The blind spot $\mathcal{S}_{\mathcal{P}} =
\mathcal{D}_0 \setminus \mathcal{H}_0$ (light region, right of
vertical line) consists of distributions that violate
$\mathcal{H}_0$ but produce no signal in the test statistic.
Detection incoherence holds when
$\mathcal{D}_0 \setminus \mathcal{H}_0 \neq \emptyset$, as shown.}
\label{fig:partitions}
\end{figure}

%
\begin{definition}[Detection coherence]\label{Def:calcoh}
Let $\mathcal R \subseteq \mathcal P$ be a domain. A test is
detection coherent with respect to $\mathcal R$ if its blind
spot with respect to $\mathcal R$ is empty:
$$
\mathcal{S}_{\mathcal{R}} = \emptyset.
$$
Otherwise the test is detection incoherent with respect to
$\mathcal{R}$.

A test is (universally) detection coherent if
$\mathcal{S}_{\mathcal{P}} = \emptyset$. Otherwise the test is detection incoherent 
with respect to $\mathcal{P}$.
\end{definition}

\begin{corollary}[Coherence as alternative--detection-set agreement]\label{cor:calcohequiv}
A test is detection coherent with respect to $\mathcal{R}$ if and
only if its two-sided alternative $\mathcal{H}_0^c$ coincides with its detection set $\mathcal{D}$
over~$\mathcal{R}$:
$$
\mathcal{H}_0^c \cap \mathcal{R} = \mathcal{D} \cap \mathcal{R}.
$$
In particular, a test is universally detection coherent if and
only if $\mathcal{H}_0^c = \mathcal{D}$.
\end{corollary}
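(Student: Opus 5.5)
The plan is to reduce the statement to elementary set algebra, starting from Corollary~\ref{cor:blindequivR} and the inclusion $\mathcal{H}_0 \subseteq \mathcal{D}_0$ recorded in the proof of Proposition~\ref{prop:blindequiv}. Two facts come from the definitions.

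The first is that $\mathcal{H}_0$ and $\mathcal{H}_0^c$ are complementary within $\mathcal{P}$. For the universal statement we take $\mathcal{H}_0^c = \mathcal{P}\setminus\mathcal{H}_0$, as in Proposition~\ref{prop:blindequiv}.

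The second is that $\mathcal{D}_0$ and $\mathcal{D}$ are complementary in $\mathcal{P}$ by Definition~\ref{Def:detectionset}. Combined with $\mathcal{H}_0\subseteq\mathcal{D}_0$, this gives $\mathcal{D}\subseteq\mathcal{H}_0^c$. Intersecting with $\mathcal{R}$ yields the inclusion $\mathcal{D}\cap\mathcal{R}\subseteq\mathcal{H}_0^c\cap\mathcal{R}$, which holds for every test and every domain. Coherence is therefore equivalent to the reverse inclusion alone.

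For the reverse inclusion I would use the form $\mathcal{S}_{\mathcal{R}} = (\mathcal{H}_0^c\setminus\mathcal{D})\cap\mathcal{R}$ from Corollary~\ref{cor:blindequivR}. This set is empty if and only if every $P\in\mathcal{H}_0^c\cap\mathcal{R}$ lies in $\mathcal{D}$, that is, if and only if $\mathcal{H}_0^c\cap\mathcal{R}\subseteq\mathcal{D}\cap\mathcal{R}$.

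Both directions then follow. If $\mathcal{S}_{\mathcal{R}}=\emptyset$, the reverse inclusion holds, and together with the automatic inclusion we get equality. Conversely, if $\mathcal{H}_0^c\cap\mathcal{R}=\mathcal{D}\cap\mathcal{R}$, then $(\mathcal{H}_0^c\setminus\mathcal{D})\cap\mathcal{R} = (\mathcal{H}_0^c\cap\mathcal{R})\setminus(\mathcal{D}\cap\mathcal{R})=\emptyset$. By Definition~\ref{Def:calcoh}, the test is then coherent with respect to $\mathcal{R}$. The universal case is the specialization $\mathcal{R}=\mathcal{P}$, where intersecting with $\mathcal{P}$ changes nothing, so the condition becomes $\mathcal{H}_0^c=\mathcal{D}$.

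The main obstacle is bookkeeping rather than substance, because $\mathcal{H}_0^c$ is defined relative to the domain in Definition~\ref{Def:nullhyp}, while $\mathcal{D}$ is defined relative to $\mathcal{P}$. I would handle this by always intersecting with $\mathcal{R}$: for $P\in\mathcal{R}$, membership in $\mathcal{H}_0^c$ means $P\notin\mathcal{H}_0$, whichever convention is used. With that convention fixed, the automatic inclusion $\mathcal{D}\cap\mathcal{R}\subseteq\mathcal{H}_0^c\cap\mathcal{R}$ uses only $\mathcal{H}_0\subseteq\mathcal{D}_0$. That inclusion in turn rests on the standing assumption of Definition~\ref{Def:nullvalue} that $\theta$ is constant on $\mathcal{H}_0$.
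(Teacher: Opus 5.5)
Your proof is correct and is essentially the paper's argument. Both proofs read emptiness of $\mathcal{S}_{\mathcal{R}}$ from Corollary~\ref{cor:blindequivR} as the inclusion $\mathcal{H}_0^c\cap\mathcal{R}\subseteq\mathcal{D}\cap\mathcal{R}$, combine it with the automatic reverse inclusion, and then set $\mathcal{R}=\mathcal{P}$. The differences are minor: you use the form $(\mathcal{H}_0^c\setminus\mathcal{D})\cap\mathcal{R}$ instead of $\mathcal{H}_0^c\cap\mathcal{D}_0\cap\mathcal{R}$, and you justify the inclusion $\mathcal{D}\cap\mathcal{R}\subseteq\mathcal{H}_0^c\cap\mathcal{R}$ from $\mathcal{H}_0\subseteq\mathcal{D}_0$, which the paper only asserts.
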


\begin{proof}
By Corollary~\ref{cor:blindequivR}, $\mathcal{S}_{\mathcal{R}}
= \mathcal{H}_0^c \cap \mathcal{D}_0 \cap \mathcal{R}$. This is
empty if and only if $\mathcal{H}_0^c \cap \mathcal{R} \subseteq
\mathcal{D} \cap \mathcal{R}$ (no point of the alternative,
restricted to $\mathcal{R}$, lies in $\mathcal{D}_0$); since
$\mathcal{D} \cap \mathcal{R} \subseteq \mathcal{H}_0^c \cap
\mathcal{R}$ always holds, this is equivalent to
$\mathcal{H}_0^c \cap \mathcal{R} = \mathcal{D} \cap \mathcal{R}$.
Taking $\mathcal{R} = \mathcal{P}$ gives the universal case.
\end{proof}

%
\begin{note}[Persistence of coherence under restrictions]
Detection coherence exhibits the following persistence properties:

{Coherence persists:} If an unrestricted test is detection
coherent, then it remains detection coherent with respect to any
domain $\mathcal R \subseteq \mathcal P$. Such tests are universally
detection coherent.

{Coherence can be restored:} If an unrestricted test is
detection incoherent, it may become detection coherent with
respect to appropriately chosen domains
$\mathcal R \subsetneq \mathcal P$.
\end{note}

%
%
\begin{assumption}
\label{assm:fixed-limit}
The calibration statistic $T_n$ has a fixed limiting
distribution $\mathcal{L}(P)$ at each $P \in
\mathcal{D}_0 \cap \mathcal{R}$:
$$
T_n \xrightarrow{d} \mathcal{L}(P)
\quad \text{for each } P \in \mathcal{D}_0 \cap
\mathcal{R}.
$$
\end{assumption}

\begin{proposition}[Power characterisation of the blind spot]
\label{prop:power}
Consider a test defined over domain~$\mathcal{R}$, with
$\mathcal{H}_0^c$ and $\mathcal{D}$ two-sided as in
Proposition~\ref{prop:blindequiv}.
Assume that Assumption~\ref{assm:fixed-limit} holds
and that $\theta(P) \neq \theta^*$ implies power
tends to~$1$. Then the blind spot
$$
\mathcal{S}_{\mathcal{R}} =
(\mathcal{H}_0^c \setminus \mathcal{D}) \cap \mathcal{R}
$$
coincides with the set of alternatives
$P \in \mathcal{H}_0^c \cap \mathcal{R}$ against
which power converges to a fixed value less than~$1$:
$$
\lim_{n \to \infty}
\mathbb{P}_P(\varphi_n = 1) =
\mathcal{L}(P)(\mathcal{C}) \in [0, 1),
$$
where $\mathcal C$ is the asymptotic critical region of the test.

Power does not grow with sample size against any
$P \in \mathcal{S}_{\mathcal{R}}$.
\end{proposition}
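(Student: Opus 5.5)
The plan is to prove the set equality by double inclusion inside $\mathcal{H}_0^c \cap \mathcal{R}$, after rewriting the blind spot through Corollary~\ref{cor:blindequivR} as
$$
\mathcal{S}_{\mathcal{R}} = \mathcal{H}_0^c \cap \mathcal{D}_0 \cap \mathcal{R}.
$$
Since $\{\mathcal{D}_0, \mathcal{D}\}$ partitions $\mathcal{P}$, every alternative $P \in \mathcal{H}_0^c \cap \mathcal{R}$ lies either in $\mathcal{D}_0$ or in $\mathcal{D}$. The statement is therefore equivalent to a dichotomy: power converges to a value in $[0,1)$ exactly on the first piece, and it does not do so on the second.

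\textbf{First inclusion.} Take $P \in \mathcal{S}_{\mathcal{R}}$, so that $P \in \mathcal{D}_0 \cap \mathcal{R}$.
\begin{enumerate}[(i)]
\item Assumption~\ref{assm:fixed-limit} gives $T_n \xrightarrow{d} \mathcal{L}(P)$, a limit that does not depend on $n$.
\item The rejection rule is $\varphi_n = \mathbf{1}\{T_n \in \mathcal{C}_n\}$, with critical regions $\mathcal{C}_n \to \mathcal{C}$ fixed by null calibration.
\item If $\mathcal{L}(P)(\partial\mathcal{C}) = 0$, the portmanteau theorem (with Slutsky for the converging critical values) yields
$$
\mathbb{P}_P(\varphi_n = 1) \to \mathcal{L}(P)(\mathcal{C}).
$$
\item It remains to show $\mathcal{L}(P)(\mathcal{C}) < 1$. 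Here I would argue that $\mathcal{L}(P)$ is a proper (tight) law on the range of $T_n$. For the statistics in Table~\ref{tab:detectors} it is a centred normal or chi-square-type law, possibly with a variance different from the null one. Such a law places positive mass on the complement of the tail region $\mathcal{C}$.
\end{enumerate}

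\textbf{Second inclusion.} Take $P \in (\mathcal{H}_0^c \cap \mathcal{R}) \setminus \mathcal{S}_{\mathcal{R}}$, so that $P \in \mathcal{D}$ and $\theta(P) \neq \theta^*$. The hypothesis of the proposition then gives $\mathbb{P}_P(\varphi_n = 1) \to 1$. Such a $P$ is therefore not among the alternatives whose power converges to a value below $1$. Combining the two inclusions gives the equality. The final sentence of the statement, that power does not grow with sample size on $\mathcal{S}_{\mathcal{R}}$, follows at once: on the blind spot the power has a limit below $1$ that does not depend on $n$.

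\textbf{Main obstacle.} The hard part is the strict inequality $\mathcal{L}(P)(\mathcal{C}) < 1$ together with the continuity condition $\mathcal{L}(P)(\partial\mathcal{C}) = 0$. Neither follows from Assumption~\ref{assm:fixed-limit} as stated, since in principle some non-degenerate limit could put full mass on $\mathcal{C}$. I would first try to derive both from the structure $T_n = \Phi(\hat D_n, \hat\eta_n)$. On $\mathcal{D}_0$ we have $\hat D_n \to \theta^*$, so the centring of $\Phi$ is correct and $T_n$ is a rescaled $O_p(1)$ fluctuation around the null centre. Its limit is then a continuous law, for example $N(0,\tau^2)$ for WMW and logrank or a weighted chi-square for KW and Friedman, and such a law charges a neighbourhood of the centre, which lies outside $\mathcal{C}$. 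If this cannot be made fully general, I would state it as a mild regularity clause implicit in the phrase ``fixed limiting distribution'': $\mathcal{L}(P)$ is a continuous law that gives positive mass to the acceptance region. The clause would then be verified for each of the four tests.
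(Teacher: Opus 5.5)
Your double-inclusion argument is the same as the paper's: on $\mathcal{S}_{\mathcal{R}}$, Assumption~\ref{assm:fixed-limit} gives power converging to the fixed value $\mathcal{L}(P)(\mathcal{C})$, and on $\mathcal{D}\cap\mathcal{R}$ the consistency hypothesis gives power tending to~$1$. The two conditions you flag, $\mathcal{L}(P)(\partial\mathcal{C})=0$ and $\mathcal{L}(P)(\mathcal{C})<1$, are simply asserted in the paper's proof rather than derived; your fallback of stating them as an explicit regularity clause is the correct fix, since consistency of $\hat D_n$ controls only $\hat D_n-\theta^*=o_p(1)$, not the $a_n$-scale law, and so cannot deliver them.
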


\begin{proof}
\textit{Forward inclusion.}
Let $P \in (\mathcal{H}_0^c \setminus \mathcal{D}) \cap \mathcal{R}$.
Then $\theta(P) = \theta^*$, so
$\hat{D}_n \xrightarrow{p} \theta^*$.
By Assumption~\ref{assm:fixed-limit},
$T_n \xrightarrow{d} \mathcal{L}(P)$ and
power converges to $\mathcal{L}(P)(\mathcal{C})
\in [0,1)$.
Hence power does not grow with sample size
against~$P$.

\textit{Reverse inclusion.}
Let $P \in \mathcal{H}_0^c \cap \mathcal{R}$
with $P \notin \mathcal{D}_0$.
Then $\theta(P) \neq \theta^*$, and by the
consistency assumption power tends to~$1$.
Hence the set of alternatives against which
power does not grow with sample size coincides
with $\mathcal{S}_{\mathcal{R}} =
\mathcal{H}_0^c \cap \mathcal{D}_0 \cap
\mathcal{R}$.
\end{proof}

%
%
\begin{corollary}[Consistency over $\mathcal{H}_0^c$]
If the blind spot $\mathcal{S}_{\mathcal{P}}$ is empty,
the test is consistent over all of $\mathcal{H}_0^c$.
\end{corollary}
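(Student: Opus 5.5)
The corollary follows from Corollary~\ref{cor:calcohequiv} together with the standing hypothesis of Proposition~\ref{prop:power} that $\theta(P)\neq\theta^*$ implies power tends to~$1$. Throughout I work over $\mathcal{R}=\mathcal{P}$ and read ``consistent'' as: for every $P\in\mathcal{H}_0^c$, $\mathbb{P}_P(\varphi_n=1)\to 1$.

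\emph{Step 1.} Assume $\mathcal{S}_{\mathcal{P}}=\emptyset$. By Definition~\ref{Def:calcoh} the test is universally detection coherent. Corollary~\ref{cor:calcohequiv} then gives $\mathcal{H}_0^c=\mathcal{D}$. Directly, Proposition~\ref{prop:blindequiv} gives $\mathcal{H}_0^c\cap\mathcal{D}_0=\emptyset$, so every alternative lies in the complement $\mathcal{D}$ of $\mathcal{D}_0$.

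\emph{Step 2.} Fix an arbitrary $P\in\mathcal{H}_0^c$. By Step~1, $P\in\mathcal{D}$, i.e.\ $\theta(P)\neq\theta^*$. The consistency hypothesis used in Proposition~\ref{prop:power} (its ``reverse inclusion'' step) then yields $\mathbb{P}_P(\varphi_n=1)\to 1$. Since $P$ was arbitrary, the test is consistent over all of $\mathcal{H}_0^c$. Equivalently, by Proposition~\ref{prop:power} the set of alternatives against which power fails to tend to one equals $\mathcal{S}_{\mathcal{P}}$, which is empty.

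The only real obstacle is making explicit that the corollary carries over the hypothesis ``$\theta(P)\neq\theta^*$ implies power $\to1$''. This is not a consequence of Definition~\ref{Def:detection} alone. Consistency of the detector, $\hat D_n\xrightarrow{p}\theta(P)\neq\theta^*$, must still translate into $T_n$ falling in the critical region with probability tending to one. For ratio statistics such as the WMW one, this holds because $\sqrt{n}(\hat D_n-\theta^*)/\hat\sigma$ diverges. For quadratic forms it holds because $n\|\hat D_n-\theta^*\|^2\to\infty$ while the critical value stays bounded. I would state this dependence explicitly and then invoke it as above. Assumption~\ref{assm:fixed-limit} is not needed, since it concerns only $\mathcal{D}_0$, which contains no alternatives here.
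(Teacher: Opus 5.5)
Your proof is correct and follows the route the paper intends: the paper gives no separate proof, but reads the corollary off Proposition~\ref{prop:power}. There, the reverse-inclusion step uses only the hypothesis that $\theta(P)\neq\theta^*$ implies power tends to~$1$, and $\mathcal{S}_{\mathcal{P}}=\emptyset$ places every alternative in $\mathcal{D}$. Two of your remarks are accurate refinements of that unstated argument: the power hypothesis has to be carried into the corollary, since it does not follow from Definition~\ref{Def:detection}, and Assumption~\ref{assm:fixed-limit} is not needed.
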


%
%
\begin{note}[The limiting distribution at blind spot alternatives]
\label{Note:non-growing}
Proposition~\ref{prop:power} characterises the blind
spot as the set of alternatives against which power
does not grow with sample size.
At each $P^* \in \mathcal{S}_{\mathcal{R}}$,
Assumption~\ref{assm:fixed-limit} gives a fixed
limiting distribution $\mathcal{L}(P^*)$, and power
converges to
$$c(P^*, \lambda) = \mathcal{L}(P^*)(\mathcal{C})
\in [0, 1),$$ a value that varies across blind spot distributions
and across sample size configurations $\lambda  = \lim n_1/n \in
(0,1)$ (the limiting allocation between the two samples).

The value $c(P^*, \lambda)$ depends on how
$\mathcal{L}(P^*)$ relates to the reference
distribution $\mathcal{L}_0$;
in particular: it equals~$\alpha$
when $\mathcal{L}(P^*) = \mathcal{L}_0$; exceeds~$\alpha$
when $\mathcal{L}(P^*)$ assigns more
probability to $\mathcal{C}$ than $\mathcal{L}_0$
does; falls below~$\alpha$ when $\mathcal{L}(P^*)$
assigns less; and approaches~$0$ when $\mathcal{L}(P^*)$
assigns vanishingly small probability to $\mathcal{C}$.

The logrank test (Example~\ref{ex:logrankMC}) illustrates
$c(P^*, \lambda) = \alpha$: the test statistic
$T_n = \sqrt{n}U_n/\hat{\sigma}$ is self-normalised,
where the martingale-based variance estimator $\hat{\sigma}^2$
is consistent for the true asymptotic variance at any distribution,
including blind spot distributions. Self-normalisation therefore
gives $\mathcal{L}(P^*) = N(0,1) = \mathcal{L}_0$
at blind spot distributions, regardless of the value of $\sigma^2(P^*)$.
The WMW test (Example~\ref{ex:WMWMC}) illustrates
$c(P^*, \lambda) > \alpha$: the null variance
formula underestimates the true variance at
heteroskedastic equal-mean blind spot distributions,
giving $c(P^*, \lambda) \approx 0.10$ for equal
sample sizes and normal distributions with extreme variance
ratio.
\end{note}

%
%
\section{Systematic Assessment of Detection Coherence}
\label{Sect:recipe}

The following procedure provides a systematic approach for
assessing the detection coherence of any statistical test.

\begin{procedure}[Assessment of detection coherence]
\label{Proc:assessment}
Given a test defined over domain $\mathcal{R} \subseteq
\mathcal{P}$:

\medskip
\noindent\textbf{Step~1.} Identify the detector $\hat{D}_n$
and the detection functional $\theta(P)$.
The detector is identified from the representation
$T_n = \Phi(\hat{D}_n, \hat{\eta}_n)$ as the component
of the calibration statistic whose probability limit carries
information about the null hypothesis:
$$
\hat{D}_n \xrightarrow{p} \theta(P)
\quad \text{for all } P \in \mathcal{P}.
$$
For standardised statistics of the form
$T_n = a_n \hat{D}_n / \hat{\sigma}_n$,
the extraction procedure of
Definition~\ref{Def:extraction} provides a systematic
method: remove the scaling $a_n$, the variance
normalisation $\hat{\sigma}_n$, and any additive nuisance
term with a constant probability limit under the null;
the probability limit of what remains is $\theta(P)$.

\medskip
\noindent\textbf{Step~2.} Obtain the detection-null value
$\theta^*$ by evaluating the detection functional at any
$P \in \mathcal{H}_0$:
$$
\theta^* = \theta(P), \qquad P \in \mathcal{H}_0.
$$

\medskip
\noindent\textbf{Step~3.} Construct the detection-null set
$\mathcal{D}_0 = \{P \in \mathcal{P}\mathpunct{:} \theta(P) = \theta^*\}$
and its restriction $\mathcal{D}_0 \cap \mathcal{R}$ to the
domain.

\medskip
\noindent\textbf{Step~4.} Check whether the null under which the test is calibrated
coincides with the detection-null set over $\mathcal{R}$:
$$
\mathcal{H}_0 \cap \mathcal{R}
\stackrel{?}{=}
\mathcal{D}_0 \cap \mathcal{R}.
$$
The test is \emph{detection coherent} with respect to
$\mathcal{R}$ if and only if the domain-restricted blind
spot
$$
\mathcal{S}_{\mathcal{R}}
= (\mathcal{D}_0\setminus\mathcal{H}_0) \cap \mathcal{R}
$$
is empty; otherwise it is \emph{detection incoherent}
with respect to $\mathcal{R}$.
\end{procedure}

\begin{note}[Minimality of the detector]
\label{Note:minimality}
The detector $\hat{D}_n$ identified in Step~1 should satisfy
the minimality condition of Definition~\ref{Def:detection}(ii).
For the tests considered in this paper
(WMW, KW, Friedman, logrank, KS, Zaremba, MMD)
the extraction is unique and minimality is satisfied
automatically.
The condition is relevant when multiple representations
of the calibration statistic are available; in that case
the minimal one should be used, as a coarser representation
may obscure the detection-null set.
\end{note}

\begin{note}[Step~2 is well-defined]
\label{Note:step2-welldefined}
By Definition~\ref{Def:nullvalue}, $\theta^*$ does not depend
on which $P\in\mathcal{H}_0$ is used to compute it. In
practice the most convenient choice is typically the simplest
member of the null, such as equal distributions or a specific
parametric model.
\end{note}
%
%
%
%
\section{Detection Incoherence of Major Unrestricted Nonparametric
Tests}\label{Sect:major}

We demonstrate that four fundamental nonparametric tests (the
WMW test, KW test, Friedman test, and logrank test) are detection
incoherent when applied as unrestricted tests.

For each unrestricted test, we present: null hypothesis, test
statistic, calibration statistic, detector, detection
functional, detection-null value, detection-null set, blind spot,
detection incoherence, and a constructive example demonstrating
that $\mathcal{S}_{\mathcal P} \neq \emptyset$.

\subsection{Wilcoxon--Mann--Whitney test}

\begin{description}
\item{Null hypothesis}: $\mathcal H_0 = \{(F_1,F_2)\mathpunct{:} F_1 = F_2\}$;
cf.~\cite{wilcoxon1945individual, mann1947test,
lehmann1975nonparametrics, lee1990u, maritz1995distribution}
\item{Test statistic}: $U = \sum_{i=1}^{n_1} \sum_{j=1}^{n_2}
{1}(X_i < Y_j)$
\item{Calibration statistic}: $Z_{\mathrm{WMW}} = \sqrt{n}(\mathrm{eAUC} - 1/2)/\hat{\sigma}$ with $N(0,1)$ reference distribution under~$H_0$
\item{Detector}: $\mathrm{eAUC} = U/(n_1 n_2)$
\item{Detection functional}: $\theta(F_1, F_2) = \mathrm{AUC} =
P(X < Y) + \tfrac{1}{2}P(X = Y)$
\item{Detection-null value}: $\theta^* = 1/2$
\item{Detection-null set}: $\mathcal{D}_0 = \left\{(F_1,F_2): \mathrm{AUC} = 1/2\right\}$
\item{Blind spot}: $\mathcal{S}_{\mathcal P} = \mathcal{D}_0
\setminus \mathcal{H}_0 = \left\{(F_1,F_2)\mathpunct{:} \mathrm{AUC} = 1/2\right\}
\setminus \left\{(F_1,F_2)\mathpunct{:} F_1 = F_2\right\}$
\item{Detection incoherence}: $\mathcal{S}_{\mathcal P} \neq \emptyset$
\end{description}

\begin{example}\label{ex:WMWMC}
Let $X_1 \sim N(0,1)$ and $X_2 \sim N(0,\sigma^2)$
for any $\sigma \neq 1$. Since both distributions have equal means,
by symmetry of both distributions around zero, $\mathrm{AUC} = 1/2$
exactly, so $(F_1,F_2) \in \mathcal{D}_0$. However, $F_1 \neq F_2$ whenever
$\sigma \neq 1$, so $(F_1,F_2) \notin \mathcal{H}_0$. Hence
$(F_1,F_2) \in \mathcal D_0 \setminus \mathcal H_0 = \mathcal{S}_{\mathcal{P}}$,
demonstrating $\mathcal{S}_{\mathcal P} \neq \emptyset$.

A Monte Carlo study (see~Figure~\ref{fig:WMWMC})
demonstrates that WMW power against the blind spot alternative
($N(0,1)$ and $N(0,100)$)
converges for equal group sizes to approximately 0.1 and remains there regardless~$n$,
while power against an alternative outside the blind spot
($N(0,1)$ and $N(0.5,1)$) grows with~$n$.

\begin{figure}[htbp] 
\centering
    \includegraphics[width=0.5\textwidth]{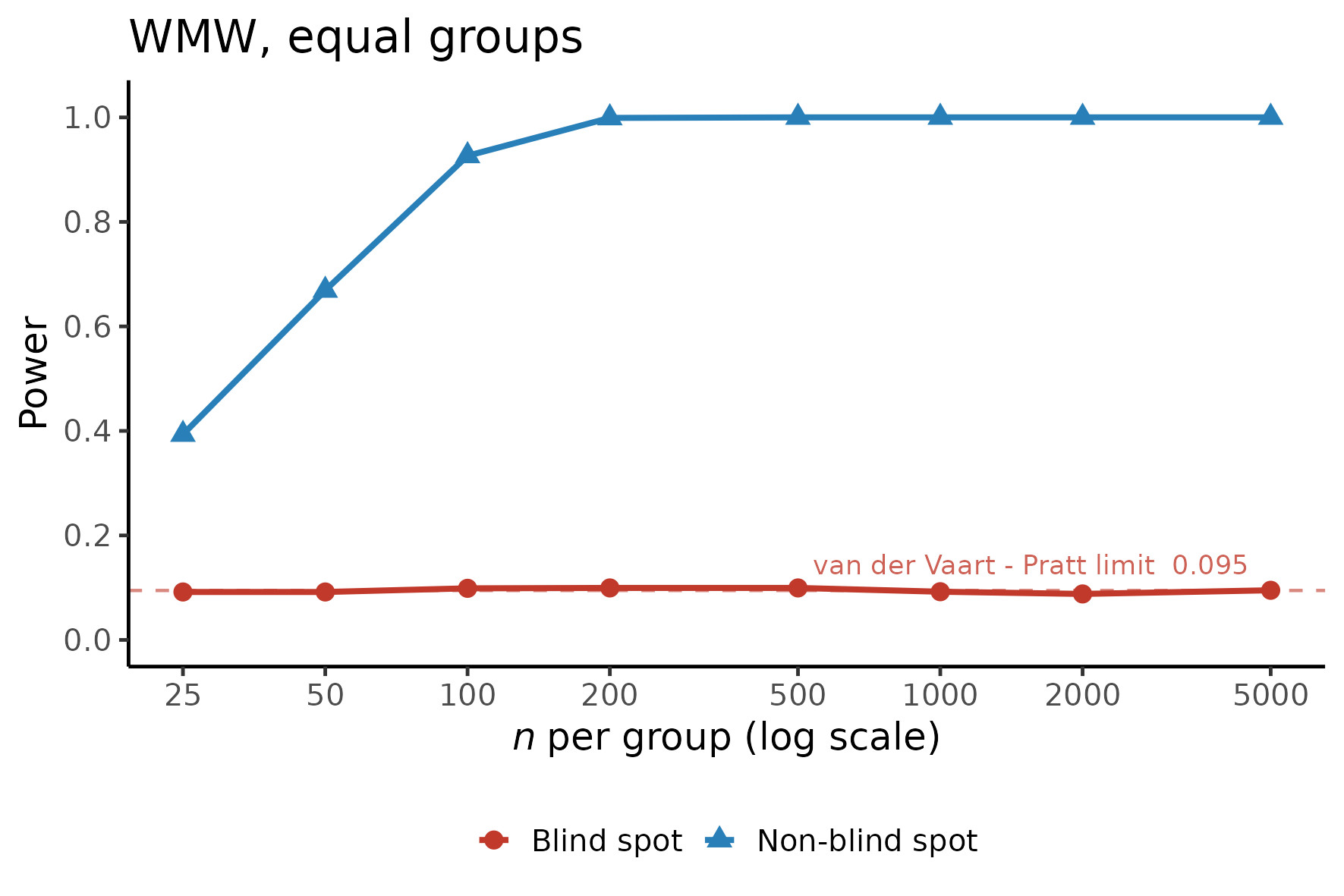}
    \caption{Power of the WMW test against two  alternatives: the blind
    spot distribution ($N(0,1)$ and $N(0,100)$, $\mathrm{AUC} = 0.5$, red circles) and a near-blind-spot distribution
    ($N(0,1)$ and $N(0.5,1)$, $\mathrm{AUC} \neq 0.5$, blue triangles).
    For equal group size, power against the blind spot alternative converges to approximately $0.1$ and does not grow with~$n$;
    power against the non-blind-spot alternative grows with~$n$,
    reaching $1$ at $n \approx 200$. Dashed line: van der Vaart--Pratt limit $0.095$.}
    \label{fig:WMWMC}
\end{figure}

Van der Vaart~\cite{vandervaart1961} derived the asymptotic rejection rate
of the WMW test at heteroskedastic equal-mean normal distributions as a function
of the variance ratio and sample size ratio, classifying it as a probability of Type I error;
Pratt~\cite{pratt1964robustness} computed related quantities by a different method.
Since these distributions lie in $\mathcal{H}_0^c$, the rejection rate is power,
not Type I error. The confusion arises because the DGP van der Vaart and Pratt study
happens to fall in WMW's blind spot. At blind spot distributions, power does not grow
with sample size -- behaviorally indistinguishable from Type~I error inflation
if the existence of blind spot is unrecognized.
Detection coherence is conceptually distinct from the robustness of
tests~(cf.~Box~\cite{box1953}): robustness asks whether Type~I error
is controlled when distributional assumptions are violated; detection
coherence asks whether the test has a blind spot.

\end{example}

\subsection{Kruskal--Wallis test}

\begin{description}
\item{Null hypothesis}: $\mathcal H_0 = \{(F_1,\ldots,F_K)\mathpunct{:}
F_1 = \cdots = F_K\}$; cf.~\cite{kruskal1952use}
\item{Test statistic}: $H = \frac{12}{n(n+1)} \sum_{k=1}^K n_k
\left(\bar{R}_k - \frac{n+1}{2}\right)^2$
\item{Calibration statistic}: $H$ with $\chi^2_{K-1}$ reference
distribution under $H_0$
\item{Detector}: $(\bar{R}_1/(n+1),\ldots,\bar{R}_K/(n+1))$
\item{Detection functional}: $\theta(F_1,\ldots,F_K) =
(p_1,\ldots,p_K)$ where $p_k = P(X_k < X) +
\tfrac{1}{2}P(X_k = X)$ for $X$ from the pooled distribution
\item{Detection-null value}: $\theta^* = (1/2,\ldots,1/2)$
\item{Detection-null set}: $\mathcal{D}_0 = \{(F_1,\ldots,F_K)\mathpunct{:}
(p_1,\ldots,p_K) = (1/2,\ldots,1/2)\}$
\item{Blind spot}: $\mathcal{S}_{\mathcal P} = \mathcal{D_0} \setminus \mathcal{H}_0$
\item{Detection incoherence}: $\mathcal{S}_{\mathcal P}
\neq \emptyset$
\end{description}

\begin{example}
Consider three normal distributions:
$X_1 \sim N(0,1)$, $X_2 \sim N(0,4)$,
$X_3 \sim N(0,9)$.
Since all three distributions are symmetric around zero,
$P(X_i < X_j) = 1/2$ for every pair $(i,j)$, and the
detection functional satisfies $\varepsilon^2 = 0$, so
$(F_1,F_2,F_3) \in \mathcal{D}_0$.
Yet $F_1, F_2, F_3$ are distinct, so
$(F_1,F_2,F_3) \notin \mathcal{H}_0$.
Hence $(F_1,F_2,F_3) \in \mathcal{D}_0 \setminus \mathcal{H}_0
= \mathcal{S}_{\mathcal{P}}$, demonstrating
$\mathcal{S}_{\mathcal{P}} \neq \emptyset$.
\end{example}

\subsection{Friedman test}

\begin{description}
\item{Null hypothesis}: $\mathcal H_0 = \{P\mathpunct{:} F_{i1} = \cdots =
F_{ik}\ \forall\, i = 1,\ldots,n\}$ (equal within-block treatment
distributions); cf.~\cite{friedman1937use}
\item{Test statistic}: $Q = \frac{12n}{k(k+1)} \sum_{j=1}^k
\left(\bar{r}_j - \frac{k+1}{2}\right)^2$
\item{Calibration statistic}: $Q$ with $\chi^2_{k-1}$ reference
distribution under $H_0$
\item{Detector}: $(\bar{R}_{\cdot 1},\ldots,\bar{R}_{\cdot k})$
(vector of mean ranks per treatment)
\item{Detection functional}: $\theta(P) =
(\mu_1,\ldots,\mu_k)$ where $\mu_j = E[R_{ij}]$ is the
population mean rank of treatment $j$ across blocks
\item{Detection-null value}: $\theta^* =
((k+1)/2,\ldots,(k+1)/2)$
\item{Detection-null set}: $\mathcal{D}_0 = \{P\mathpunct{:}
(\mu_1,\ldots,\mu_k) = ((k+1)/2,\ldots,(k+1)/2)\}$
\item{Blind spot}: $\mathcal{S}_{\mathcal P} = \mathcal{D_0} \setminus \mathcal{H}_0$
\item{Detection incoherence}: $\mathcal{S}_{\mathcal P}
\neq \emptyset$
\end{description}

\begin{example}
Consider $k = 2$ treatments and $n$ blocks. In block~$i$, let
$X_{iA} = \delta_i + \varepsilon_{iA}$ and $X_{iB} = -\delta_i + \varepsilon_{iB}$,
where $\varepsilon_{iA}, \varepsilon_{iB}$ are i.i.d.\ continuous with
a symmetric distribution, and $\delta_i$ alternates in sign:
$\delta_i = +\delta$ for odd~$i$ and $\delta_i = -\delta$ for even~$i$,
with $\delta > 0$. Within each block Treatment~A and Treatment~B are
compared by rank. In odd blocks Treatment~A stochastically dominates
Treatment~B (rank 2 in expectation); in even blocks the roles are
reversed (rank 1 in expectation). The population mean ranks are
$\mu_A = \mu_B = 3/2 = (k+1)/2$, so $\theta(P) = (3/2, 3/2) = \theta^*$
and $P \in \mathcal{D}_0$.
Yet $F_{iA} \neq F_{iB}$ in every block (the within-block treatment
distributions differ by $2|\delta_i| > 0$), so $P \notin \mathcal{H}_0$.
Hence $P \in \mathcal{D}_0 \setminus \mathcal{H}_0 = \mathcal{S}_{\mathcal{P}}$,
demonstrating $\mathcal{S}_{\mathcal{P}} \neq \emptyset$.
\end{example}

\subsection{Logrank test}

\begin{description}
\item{Null hypothesis}: $\mathcal H_0 = \{(S_1,S_2)\mathpunct{:} S_1(t) =
S_2(t) \text{ for all } t\}$;
cf.~\cite{mantel1966evaluation, peto1972asymptotically}
\item{Test statistic}: $U_n = O - E = \sum_j(d_{1j} - E_{1j})$
\item{Calibration statistic}: $Z = U_n/\sqrt{V_n}$  with $N(0,1)$ reference distribution under $H_0$
\item{Detector}: $U_n/n$
\item{Detection functional}: $\theta(S_1,S_2) =
\int_0^\infty w(t)[\lambda_1(t) - \lambda_2(t)]S(t)\,dt$
where $w(t) = y_1(t)y_2(t)/(y_1(t)+y_2(t))$
\item{Detection-null value}: $\theta^* = 0$
\item{Detection-null set}: $\mathcal{D_0} = \left\{(S_1,S_2)\mathpunct{:}
\int_0^\infty w(t)[\lambda_1(t) - \lambda_2(t)]S(t)\,dt
= 0\right\}$
\item{Blind spot}: $\mathcal{S}_{\mathcal P} = \mathcal{D_0} \setminus \mathcal{H}_0$
\item{Detection incoherence}: $\mathcal{S}_{\mathcal P}
\neq \emptyset$
\end{description}

\begin{example}\label{ex:logrankMC}
Let $T_1$ and $T_2$ have piecewise exponential distributions
with hazard functions
$$
\lambda_1(t) = \begin{cases} a & t \leq c^* \\ b & t > c^*
\end{cases}, \quad
\lambda_2(t) = \begin{cases} b & t \leq c^* \\ a & t > c^*
\end{cases},
$$
where $a = 0.5$, $b = 2.0$, $a \neq b$, and $c^* \approx 0.518$
is the crossing point at which the weighted hazard contrast
vanishes. The hazards cross at $t = c^*$, so $S_1 \neq S_2$,
hence $(S_1,S_2) \notin \mathcal{H}_0$. A numerical calculation
gives $\int_0^\infty w(t)[\lambda_1(t)-\lambda_2(t)]S(t)\,dt = 0$,
the detection functional satisfies $\theta(P) = 0 = \theta^*$,
so $(S_1,S_2) \in \mathcal{D_0}$.
Hence $(S_1, S_2) \in \mathcal{D}_0\setminus\mathcal{H}_0
= \mathcal{S}_{\mathcal{P}}$, demonstrating
$\mathcal{S}_{\mathcal{P}} \neq \emptyset$.

A Monte Carlo study demonstrates (see~Figure~\ref{fig:MC})  that logrank power
against the blind spot alternative ($c^* \approx 0.518$, $\theta(P) = 0$)
converges to the size $\alpha$ of the test and remains there regardless~$n$;
power against a crossing-hazard alternative outside the blind spot
($c = 0.48$, $\theta(P) \neq 0$) grows with~$n$. Both configurations have $S_1 \neq S_2$
with crossing hazards; the difference in power behaviour is entirely attributable to
whether $\theta(P) = 0$ or not.

\begin{figure}[htbp] 
    \centering    
    \includegraphics[width=0.5\textwidth]{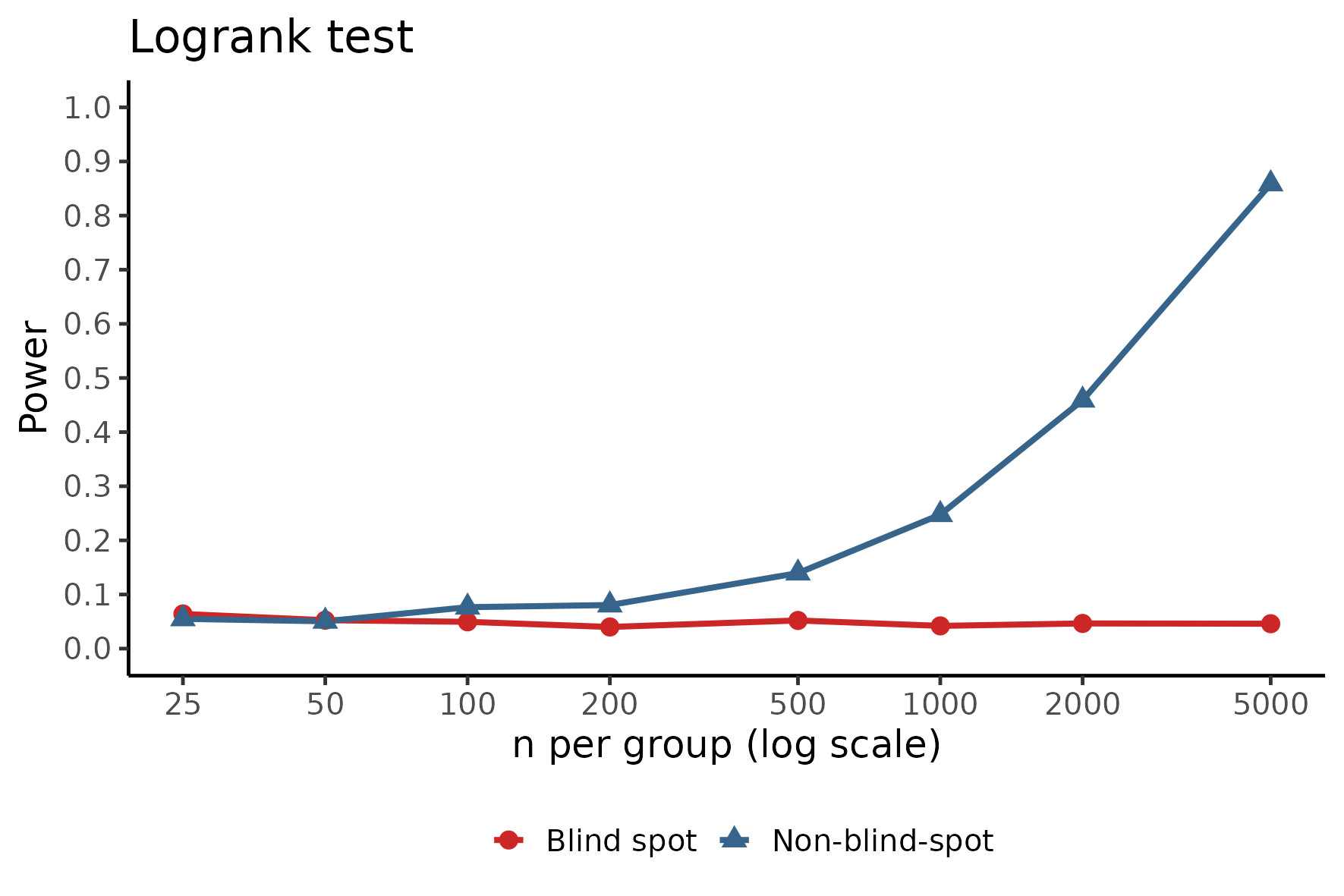}
    \caption{Monte Carlo power of the logrank test for two piecewise
exponential hazard pairs sharing rates $a = 0.5$, $b = 2.0$ and
differing only in the breakpoint~$c$: the blind spot pair
($c^*\approx 0.518$, $\theta(P) = 0$, red circles) and a nearby
pair just outside the blind spot ($c = 0.48$, $\theta(P) \neq 0$,
blue triangles); both have $S_1 \neq S_2$. Power against the
blind spot pair converges to $\alpha = 0.05$ -- the test
statistic has the same limiting distribution at the blind spot
as under ${H}_0$ -- while power against the nearby pair grows
with~$n$, reaching $0.85$ at $n = 5000$. The two are empirically
indistinguishable at small $n$; the permanent nature of the
blind spot power failure is revealed only by varying $n$.}
    \label{fig:MC}
\end{figure}

Power of the logrank close to the nominal level was noted in Monte Carlo studies of
realistic clinical trial scenarios (cf.~\cite{cheng2021}).

\end{example}

\subsection{Summary}

Table~\ref{tab:systematic} summarises the null hypothesis, detector
and detection-null set for each of the four tests;
all four are detection incoherent as unrestricted
tests: their null hypotheses~$\mathcal H_0$ are strictly smaller
than their detection-null sets~$\mathcal D_0$, creating non-empty blind spots.

\renewcommand{\arraystretch}{1.15}

\begin{table}[H]
\centering
\smallskip
\begin{tabular}{llll}
\hline
{Test} &
{Null Hypothesis}  $\mathcal H_0$ &
{Detector} $\hat{D}_n$ &
{Detection-Null Set} $\mathcal D_0$ \\
\hline
WMW &
  $F_1 = F_2$ &
  eAUC &
  $\{\mathrm{AUC} = 1/2\}$  \\
KW &
  $F_1 = \cdots = F_K$ &
  $\bigl(\bar{R}_1/(n+1),\ldots,\bar{R}_K/(n+1)\bigr)$ &
  $\{\varepsilon^2 = 0\}^{\,\dagger}$  \\
Friedman &
  $F_{i1} = \cdots = F_{ik}\ \forall\, i$ &
  $\bigl(\bar{R}_{\cdot 1},\ldots,\bar{R}_{\cdot k}\bigr)$ &
  $\{W = 0\}^{\,\dagger}$ \\
Logrank &
  $S_1 = S_2$ &
  $(O-E)/n$ &
  $\bigl\{\int w(t)[\lambda_1(t)-\lambda_2(t)]S(t)\,dt
    = 0\bigr\}$ \\
\hline
\end{tabular}
\begin{minipage}{\linewidth}
\smallskip
\footnotesize{$^\dagger$ The detection sets for KW and
Friedman are expressed through the scalar summaries
$\varepsilon^2$ and $W$ respectively, which equal zero
if and only if the vector detector equals its
detection-null value; see Appendix~\ref{App:detector}.}
\end{minipage}
\caption{Detection coherence assessment of four fundamental
nonparametric tests.}
\label{tab:systematic}
\end{table}
\renewcommand{\arraystretch}{1}

The paragraphs below describe each blind spot in the equivalent
two-sided form $\mathcal{H}_0^c \setminus \mathcal{D}$
(Proposition~\ref{prop:blindequiv}), contrasting the full
alternative $\mathcal{H}_0^c$ with what each test's detection
functional can distinguish from the null, $\mathcal{D}$.

The WMW test is calibrated under $H_0\mathpunct{:} F_1=F_2$, implying it
should detect any distributional difference. Its detection functional,
$\theta(P) = \mathrm{AUC}(P)$ -- estimated by the detector, eAUC -- is
sensitive only to departures that manifest as monotone discrimination:
systematic pairwise dominance with $\mathrm{AUC}(P) \neq 1/2$. The test is
blind to distributional differences that preserve $\mathrm{AUC}(P)=1/2$,
including variance differences, shape differences, and bimodal vs unimodal
configurations.

The KW test is calibrated under $H_0\mathpunct{:} F_1=\cdots=F_K$, implying
it should detect any difference across $K$ groups. Its detection functional,
$\theta(P) = (p_1,\ldots,p_K)$ -- estimated by the detector, the vector of
mean within-group ranks -- is sensitive only to departures that manifest as
rank-based location differences across groups. The test is blind to group
differences in variance or shape that do not manifest as rank-based location
differences.

The Friedman test is calibrated under $H_0\mathpunct{:} F_{i1} = \cdots = F_{ik}$
for all blocks~$i$, implying it should detect any within-block distributional
difference between treatments. Its detection functional, the vector of
expected within-block ranks -- estimated by the detector, the vector of mean
within-block ranks -- is sensitive only to differences that manifest as a
consistent directional shift across blocks. The test is blind to treatment
effects that vary in direction across blocks and cancel in the mean
within-block rank.

The logrank test is calibrated under $H_0\mathpunct{:} S_1=S_2$, implying it
should detect any survival difference. Its detection functional,
$\theta(P) = \int w(t)(\lambda_1(t)-\lambda_2(t))S(t)\,dt$ -- estimated by
the detector, $(O-E)/n$ -- is sensitive only to survival differences where
this integral is non-zero. The test is blind to crossing hazard
configurations, where early and late hazard differences cancel through
integration.

%
\section{Detection Coherence Under Domain Restrictions}\label{Sect:restriction}

The four tests analyzed in the previous section can achieve detection
coherence under appropriately restricted domains.

\subsection{General Principle}

A detection-incoherent test over $\mathcal{P}$ is detection coherent over
a restricted domain $\mathcal{R} \subsetneq \mathcal{P}$
if and only if $\mathcal{S}_{\mathcal{P}} \cap \mathcal{R} = \emptyset$,
i.e., no blind spot distribution belongs to~$\mathcal{R}$.

Throughout this section, it is assumed the equivalence of Proposition~\ref{prop:blindequiv} holds.
The theorems below are stated and proved in this equivalent form.

\subsection{Wilcoxon--Mann--Whitney Test Under Monotone Likelihood
Ratio Families}

We demonstrate that the WMW test achieves
detection coherence when restricted to monotone likelihood ratio
(MLR) families.

\begin{definition}[Stochastic order]
$F_1 \preceq_{st} F_2$ denotes that $F_2$ stochastically
dominates $F_1$: $F_1(x) \geq F_2(x)$ for all $x$, equivalently
$P(X_1 > x) \leq P(X_2 > x)$ for all $x$, for $X_1\sim F_1$,
$X_2\sim F_2$.
\end{definition}

\begin{definition}[MLR family]
A parametric family $\{f_\theta\mathpunct{:} \theta \in \Theta\}$ has monotone
likelihood ratio if $f_{\theta_2}(x)/f_{\theta_1}(x)$ is monotone
in $x$ for all $\theta_1 < \theta_2$.
\end{definition}

\begin{theorem}[WMW coherence under MLR restriction]
The Wilcoxon--Mann--Whitney test is detection coherent when restricted
to domains where $(F_1, F_2)$ form an MLR family.
\end{theorem}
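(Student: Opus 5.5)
By Corollary~\ref{cor:calcohequiv}, it suffices to show that the restricted blind spot is empty,
$$
\mathcal{S}_{\mathcal R} = \mathcal H_0^c \cap \mathcal D_0 \cap \mathcal R = \emptyset .
$$
Concretely, we must show that within an MLR domain, $F_1 \neq F_2$ forces $\mathrm{AUC}(F_1,F_2)\neq 1/2$. We argue by contraposition: take $(F_1,F_2)\in\mathcal R$ with $\mathrm{AUC}=1/2$ and deduce $F_1=F_2$.

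The first step converts MLR into stochastic order. Let $F_1=F_{\theta_1}$ and $F_2=F_{\theta_2}$ with $\theta_1<\theta_2$. If $\theta_1=\theta_2$ there is nothing to prove. Assume the likelihood ratio $r(x)=f_{\theta_2}(x)/f_{\theta_1}(x)$ is nondecreasing; the nonincreasing case is symmetric and gives the reverse order. The classical argument runs as follows. Because both densities integrate to one, $r$ cannot lie entirely above or entirely below $1$. Monotonicity then gives a threshold $x_0$ with $f_{\theta_2}\le f_{\theta_1}$ on $(-\infty,x_0)$ and $f_{\theta_2}\ge f_{\theta_1}$ on $(x_0,\infty)$. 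Hence $F_{\theta_1}(x)-F_{\theta_2}(x)$ is nondecreasing for $x<x_0$, nonincreasing for $x>x_0$, and tends to $0$ at both ends. It follows that $F_1(x)\ge F_2(x)$ for all $x$, that is, $F_1\preceq_{st}F_2$.

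The second step shows that stochastic order together with $\mathrm{AUC}=1/2$ forces equality. Take $X\sim F_1$ and $Y\sim F_2$ independent. Write
$$
\mathrm{AUC}-\tfrac12=\tfrac12\bigl[P(X<Y)-P(X>Y)\bigr]=\tfrac12\int \bigl(F_1(y^-)+F_1(y)-1\bigr)\,dF_2(y)\;-\;(\text{same with } F_2).
$$
Equivalently,
$$
\mathrm{AUC}-\tfrac12 = \tfrac12\int \bigl[(F_1(y^-)-F_2(y^-))+(F_1(y)-F_2(y))\bigr]\,dF_2(y),
$$
using the identity $\int (F_2(y^-)+F_2(y))\,dF_2(y)=1$. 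The integrand is nonnegative under $F_1\preceq_{st}F_2$, so $\mathrm{AUC}\ge 1/2$. Equality holds iff $F_1=F_2$ holds $F_2$-a.e., in the sense that the nonnegative function $F_1-F_2$ vanishes on the support of $F_2$.

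To upgrade this to $F_1\equiv F_2$, I would use the MLR structure again. Off the support of $F_2$ we have $f_{\theta_2}=0$, so $r=0$ there. Monotonicity of $r$ means such points can only lie to the left of the support of $F_2$. Now $F_1-F_2$ vanishes at the left endpoint of $\operatorname{supp}F_2$, and $F_2=0$ to its left. This gives $F_1=0$ there as well, so $F_1=F_2$ everywhere. In the common-support case, for example densities positive on a common interval, this step is immediate.

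The main obstacle is this last upgrade from ``equal on the support of $F_2$'' to global equality, together with the handling of ties and atoms if non-continuous families are allowed. I would first state the result under the standing assumption of a common support, where the step is trivial, and then remark how the monotone-ratio argument above covers differing supports. With both steps in place, $\mathcal D_0\cap\mathcal R=\mathcal H_0\cap\mathcal R$, so $\mathcal S_{\mathcal R}=\emptyset$, and detection coherence follows from Definition~\ref{Def:calcoh}.
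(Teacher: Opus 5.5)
Your proposal is correct and follows the paper's route in contrapositive form (MLR $\Rightarrow$ stochastic dominance $\Rightarrow$ $\mathrm{AUC}\neq 1/2$ unless $F_1=F_2$), and it proves the Lehmann lemma and the strict AUC inequality, which the paper only cites and asserts. The ``upgrade'' you flag as the main obstacle does not need MLR: once $F_1\ge F_2$ and your integrand vanishes $dF_2$-a.e., right-continuity and monotonicity of $F_1$ already force $F_1=F_2$ everywhere (atoms are handled by the $F_1(y^-)$ term), so stochastic order with $F_1\neq F_2$ always gives $\mathrm{AUC}>1/2$.
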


\begin{proof}
Let $\mathcal R_{MLR}$ denote the domain of distribution pairs forming
MLR families. We must show
$\mathcal{H}_0^c \cap \mathcal{R}_{MLR} = \mathcal{D} \cap
\mathcal{R}_{MLR}$, where $\mathcal{H}_0^c = \{(F_1,F_2)\mathpunct{:} F_1 \neq F_2\}$
and $\mathcal{D} = \{(F_1,F_2)\mathpunct{:} \mathrm{AUC} \neq 1/2\}$.

The inclusion $\mathcal{D} \cap \mathcal{R}_{MLR} \subseteq
\mathcal{H}_0^c \cap \mathcal{R}_{MLR}$ is immediate: if
$\mathrm{AUC} \neq 1/2$ then $F_1 \neq F_2$.

For the reverse inclusion, suppose $(F_1, F_2) \in \mathcal{H}_0^c
\cap \mathcal{R}_{MLR}$, so $F_1 \neq F_2$ and the distributions
form an MLR family. By the MLR property, if $F_1 \neq F_2$, then
one distribution stochastically dominates the
other~\cite{lehmann1955ordered}. Stochastic dominance implies
$\mathrm{AUC} \neq 1/2$: if $F_1 \preceq_{st} F_2$, then
$P(X_1 < X_2) > 1/2$, so $\mathrm{AUC} > 1/2$; if
$F_2 \preceq_{st} F_1$, then $\mathrm{AUC} < 1/2$. Hence
$(F_1,F_2) \in \mathcal{D} \cap \mathcal{R}_{MLR}$.

This establishes $\mathcal{H}_0^c \cap \mathcal{R}_{MLR} =
\mathcal{D} \cap \mathcal{R}_{MLR}$, proving detection coherence.
\end{proof}

\begin{remark}[Location-shift as special case]
Location-shift families, where $F_2(x) = F_1(x - \delta)$, form a
special case of MLR families. When $f_1$ has a monotone likelihood
ratio, the ratio $f_2(x)/f_1(x) = f_1(x - \delta)/f_1(x)$ is
monotone in $x$. Therefore, the WMW test is
detection coherent when restricted to location-shift families.
In this context, testing $H_0\mathpunct{:} F_1 = F_2$ becomes equivalent to
testing $H_0\mathpunct{:} \delta = 0$ (equal locations), and $\mathrm{AUC} = 1/2$
occurs precisely when locations are equal.
\end{remark}

%
%
\subsection{Logrank Test Under Ordered Hazards Families}

We demonstrate that the logrank test achieves detection coherence
when restricted to ordered hazards families.

\begin{definition}[Ordered hazards family]
A pair of hazard functions $(\lambda_1, \lambda_2)$ belongs to the
ordered hazards family if
$$
\mathcal{F}_{\text{ordered}} = \{(\lambda_1, \lambda_2) :
\lambda_1(t) \geq \lambda_2(t) \ \forall\, t \quad \text{or} \quad
\lambda_1(t) \leq \lambda_2(t) \ \forall\, t\}.
$$
\end{definition}

\begin{theorem}[Logrank coherence under ordered hazards restriction]
The logrank test is detection coherent when restricted to domains
where $(\lambda_1, \lambda_2) \in \mathcal{F}_{\text{ordered}}$.
\end{theorem}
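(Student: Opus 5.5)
Following the template of the WMW--MLR theorem, I will use Corollary~\ref{cor:calcohequiv} and show
$\mathcal{H}_0^c \cap \mathcal{R}_{\mathrm{ord}} = \mathcal{D} \cap \mathcal{R}_{\mathrm{ord}}$, where $\mathcal{R}_{\mathrm{ord}}$ is the domain of pairs with $(\lambda_1,\lambda_2)\in\mathcal{F}_{\text{ordered}}$. Here $\mathcal{H}_0^c=\{S_1\neq S_2\}$ and $\mathcal{D}=\{\theta\neq 0\}$ with $\theta(S_1,S_2)=\int_0^\infty w(t)[\lambda_1(t)-\lambda_2(t)]S(t)\,dt$.

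\emph{Easy inclusion $\mathcal{D}\subseteq\mathcal{H}_0^c$.} Suppose $S_1=S_2$. Then the cumulative hazards coincide, $\Lambda_1=\Lambda_2$, so $\lambda_1=\lambda_2$ almost everywhere. Hence the integrand vanishes a.e. and $\theta=0$. This just restates $\mathcal{H}_0\subseteq\mathcal{D}_0$ from Proposition~\ref{prop:blindequiv}.

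\emph{Reverse inclusion.} Take $(S_1,S_2)\in\mathcal{H}_0^c\cap\mathcal{R}_{\mathrm{ord}}$. Without loss of generality $\lambda_1\ge\lambda_2$ everywhere; the other case only flips the sign of $\theta$.
\begin{enumerate}[(a)]
\item Since $S_j(t)=\exp(-\int_0^t\lambda_j)$, the condition $S_1\neq S_2$ gives some $t_0$ with $\int_0^{t_0}(\lambda_1-\lambda_2)>0$.
\item Because $\lambda_1-\lambda_2\ge 0$, the set $A=\{t\le t_0:\lambda_1(t)>\lambda_2(t)\}$ has positive Lebesgue measure.
\item The integrand $w(t)[\lambda_1(t)-\lambda_2(t)]S(t)$ is nonnegative everywhere, since $w\ge 0$ and $S\ge 0$. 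So $\theta\ge 0$.
\item To get $\theta>0$ I need $w(t)S(t)>0$ on a positive-measure subset of $A$. This follows if $w(t)S(t)>0$ on $[0,t_0]$.
\end{enumerate}
Once (d) holds, $\theta>0$, so $\theta\neq\theta^*$ and $(S_1,S_2)\in\mathcal{D}\cap\mathcal{R}_{\mathrm{ord}}$.

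\emph{Main obstacle: step (d).} Here $y_j(t)$ is the limiting at-risk proportion, roughly $\pi_j S_j(t^-)G_j(t^-)$ with censoring survivor $G_j$. It can vanish, for example under heavy censoring or support ending before $t_0$. In that case, ordered hazards that differ only beyond the follow-up horizon would give $\theta=0$ with $S_1\neq S_2$, and coherence would fail. I would handle this in two ways.
\begin{itemize}
\item Make the standard identifiability assumption of the logrank consistency literature, already used implicitly in Appendix~\ref{App:detector}: $y_1,y_2>0$ on the time range $[0,\tau]$ where the survival functions are compared. Under that assumption $w>0$ and $S>0$ there.
\item Interpret $S_1=S_2$ on $[0,\tau]$ accordingly, and choose $t_0\le\tau$ in step (a).
\end{itemize}
With this, the integral of a nonnegative function that is strictly positive on a set of positive measure is strictly positive, and the equality of the two sets follows.
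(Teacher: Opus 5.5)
Your proof is correct and follows the paper's own argument. Both establish $\mathcal{H}_0^c\cap\mathcal{R}_{\text{ordered}}=\mathcal{D}\cap\mathcal{R}_{\text{ordered}}$ via Corollary~\ref{cor:calcohequiv}, with the reverse inclusion resting on the sign-constancy of $\lambda_1-\lambda_2$ and the positivity of $w(t)S(t)$; the paper simply asserts that positivity ``for all $t$ in the support'', whereas you state it as an explicit at-risk positivity condition on the comparison window, which matches the paper's appendix note on censoring. In the uncensored setting of Section~\ref{Sect:major} that extra condition is not even needed for absolutely continuous lifetimes: with $\lambda_1\ge\lambda_2$ one has $w(t)>0$ exactly where $S_1(t)>0$, and $\lambda_1=\lambda_2$ a.e.\ on that set already forces $S_1=S_2$ everywhere, since both survival functions then reach $0$ at the same time.
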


\begin{proof}
Let $\mathcal{R}_{\text{ordered}}$ denote the domain of
distribution pairs with ordered hazard functions. We must
show $\mathcal{H}_0^c \cap \mathcal{R}_{\text{ordered}} =
\mathcal{D} \cap \mathcal{R}_{\text{ordered}}$, where
$\mathcal{H}_0^c = \{(S_1,S_2)\mathpunct{:} S_1(t) \neq S_2(t)
\text{ for some } t\}$ and
$\mathcal{D} = \{(S_1,S_2)\mathpunct{:} \int_0^\infty w(t)
[\lambda_1(t)-\lambda_2(t)]S(t)\,dt \neq 0\}$.

The inclusion $\mathcal{D} \cap \mathcal{R}_{\text{ordered}}
\subseteq \mathcal{H}_0^c \cap \mathcal{R}_{\text{ordered}}$
is immediate: if $\int_0^\infty w(t)[\lambda_1(t)-
\lambda_2(t)]S(t)\,dt \neq 0$ then $\lambda_1 \neq \lambda_2$
on a set of positive measure, hence $S_1 \neq S_2$.

For the reverse inclusion, suppose $(S_1,S_2) \in
\mathcal{H}_0^c \cap \mathcal{R}_{\text{ordered}}$, so
$S_1 \neq S_2$ and $(\lambda_1,\lambda_2) \in
\mathcal{F}_{\text{ordered}}$. Under ordered hazards,
$\lambda_1(t) - \lambda_2(t)$ does not change sign.
The weight $w(t) = y_1(t)y_2(t)/(y_1(t)+y_2(t)) > 0$
and $S(t) > 0$ for all $t$ in the support. Therefore
$w(t)[\lambda_1(t)-\lambda_2(t)]$ does not change sign
and is non-zero on a set of positive measure, giving
$\int_0^\infty w(t)[\lambda_1(t)-\lambda_2(t)]S(t)\,dt
\neq 0$, hence $(S_1,S_2) \in \mathcal{D} \cap
\mathcal{R}_{\text{ordered}}$.

This establishes $\mathcal{H}_0^c \cap
\mathcal{R}_{\text{ordered}} = \mathcal{D} \cap
\mathcal{R}_{\text{ordered}}$, proving detection
coherence.
\end{proof}

\begin{remark}[Proportional hazards as special case]
Proportional hazards, where $\lambda_1(t) = \gamma\lambda_2(t)$ for
a constant $\gamma > 0$, form a special case of
$\mathcal{F}_{\text{ordered}}$: if $\gamma \geq 1$ then
$\lambda_1(t) \geq \lambda_2(t)$ for all $t$, and if $\gamma \leq 1$
then $\lambda_1(t) \leq \lambda_2(t)$ for all $t$. The ordered
hazards family is strictly larger: for example, $\lambda_1(t) = 1$
and $\lambda_2(t) = 1 + t$ satisfies $\lambda_1(t) > \lambda_2(t)$
for all $t > 0$ but has a non-constant hazard ratio
$\lambda_1(t)/\lambda_2(t) = 1 + t$, hence does not belong to any
proportional hazards family. Thus
$\mathcal{F}_{\text{PH}} \subsetneq \mathcal{F}_{\text{ordered}}$
strictly, and the logrank test is detection coherent under a
broader restriction than proportional hazards alone.
\end{remark}

\subsection{Other Tests Under Appropriate Restrictions}

For the KW and Friedman tests, we state the known domain
restrictions under which detection coherence is achieved, without
formal proof, as the restrictions are standard in the literature.

\textbf{Kruskal--Wallis test under location-shift:} Under the
restriction $F_i(x) = F(x - \mu_i)$, the vector detector $(\bar{R}_1/(n+1),\ldots,\bar{R}_K/(n+1))$
reflects relative location effects. When $(p_1,\ldots,p_K) = (1/2,\ldots,1/2)$
(detection-null value), the distributions satisfy $F_1 = \cdots = F_k$ -- the
alternative coincides with the detection set, eliminating the blind spot.

\textbf{Friedman test under block-consistent ordering:}
Suppose the within-block treatment distributions are consistently ordered
across all blocks: $F_{i1} \preceq_{st} \cdots \preceq_{st} F_{ik}$
for all $i = 1,\ldots,n$ (or the reverse ordering throughout).
Under this restriction, if any two treatment distributions differ within any block,
the ordering is strict on a set of positive measure, so $\mu_j \neq \mu_{j'}$
  for some $j, j'$, and $(\mu_1,\ldots,\mu_k) \neq \theta^*$.
When the detection-null value is attained, $\mu_j = (k+1)/2$ for all~$j$,
which under consistent ordering forces $F_{i1} = \cdots = F_{ik}$ for all~$i$.
The alternative thus coincides with the detection set, eliminating the blind spot.

%
%
\section{Three Detection-Coherent Nonparametric Tests}\label{Sect:two}

The Kolmogorov--Smirnov test~\cite{an1933sulla, smirnov1939}, 
the Zaremba test~\cite{zaremba1962}, and the MMD test with a characteristic kernel~\cite{gretton2012},
are universally detection coherent, as we now demonstrate.
For the KS test, coherence follows directly from Definition~\ref{Def:detection}.
For Zaremba's test, coherence is established via the assessment
procedure of Section~\ref{Sect:recipe}.
For the MMD test with a characteristic kernel, coherence is established via the 
assessment procedure and with the use of results from~\cite{gretton2012}.

For the KS test, the null hypothesis is
$\mathcal{H}_0 = \{(F, G)\mathpunct{:} F = G\}$,
the detector is $\sup_x |\hat{F}_n(x) - \hat{G}_n(x)|$,
the detection functional is $\theta(F, G) = \sup_x |F(x) - G(x)|$,
the detection-null value is $\theta^* = 0$, and the detection-null set is
$\mathcal{D}_0 = \{(F, G)\mathpunct{:} \sup_x |F(x) - G(x)| = 0\} =
  \{(F, G)\mathpunct{:} F = G\}$, since $\sup_x |F(x) - G(x)| = 0$ if and only if
$F = G$. Since $\mathcal{D}_0 = \mathcal{H}_0$, 
the blind spot $\mathcal{S}_{\mathcal{P}} = \mathcal{D}_0 \setminus \mathcal{H}_0$
is empty and the test is detection coherent.

For the Zaremba test, the null hypothesis is
$H_0\mathpunct{:} \mathrm{AUC} = 1/2$, the calibration statistic is
$Z_{\mathrm{Z}} = \sqrt{n}(\mathrm{eAUC} - 1/2)/\hat{\sigma}$, the detector
is $\mathrm{eAUC} = U/(n_1 n_2)$, the detection functional
is $\theta(F_1, F_2) = \mathrm{AUC} = P(X < Y) + \tfrac{1}{2}P(X = Y)$,
the detection-null value is $\theta^* = 1/2$, and the detection-null
set is $\mathcal{D}_0 = \{(F_1, F_2)\mathpunct{:} \mathrm{AUC} = 1/2\}$. Since the
null hypothesis coincides with the detection-null set, 
the blind spot $\mathcal{S}_{\mathcal{P}} =
\mathcal{D}_0 \setminus \mathcal{H}_0$ is empty and the test is
detection coherent.

For the MMD test with a characteristic kernel, the null hypothesis is 
$\mathcal{H}_0 = \{(F, G)\mathpunct{:} F = G\}$, the detector is the unbiased
$U$-statistic $\widehat{\text{MMD}}_u^2$~\cite[Lemma~6, Eq.~3]{gretton2012}, 
consistent for $\text{MMD}^2(F,G)$ on all $\mathcal P$~(\cite[Thm.~7]{gretton2012}, 
stated for the biased statistic, which differs from the unbiased one by $O(1/n)$);
the detection functional is $\theta(F,G) = \text{MMD}^2(F,G)$, and the detection-null set is
$\mathcal{D}_0 = \{(F,G)\mathpunct{:} \text{MMD}^2(F,G) = 0\}$. When the kernel is characteristic,
$\text{MMD}^2(F,G) = 0$ if and only if $F = G$ (\cite[Thm. 5]{gretton2012} and discussion following),
so $\mathcal{D}_0 = \mathcal{H}_0$ and the test is detection coherent.

Since detection coherence with respect to $\mathcal{P}$ is universal
detection coherence by definition, the three tests are universally
detection coherent.

%
\section{Discussion}\label{Sect:discussion}

\subsection{Nonparametric testing framework}

In the nonparametric testing framework (cf.~\cite{hettmansperger1984,lehmann2022}),
a test statistic is calibrated under a null $H_0$, an alternative
of interest $H_A$ is specified -- typically on interpretability or convenience
grounds, not because it exhausts the space of departures from
$H_0$ -- and consistency of the test against $H_A$ is demonstrated.
The region $(H_0 \cup H_A)^c \subset \mathcal{P}$ is left uncharacterised.
Data, however, arrive from  $\mathcal{P}$, not from $H_0\cup H_A$.
Gnedenko's 1958 example for the WMW test (see~\S\ref{sec:history-wmw}) is a distribution 
pair in exactly this uncharacterised region.

The detection coherence framework
addresses this gap. The detection-null
$\mathcal{D}_0$, the set of
distributions at which the detection
functional takes its null value, is
determined by the test statistic, not
chosen by the analyst. Detection
coherence requires $\mathcal{H}_0 =
\mathcal{D}_0$ over $\mathcal{P}$; 
the null and the detection-null
coincide over the full distribution
space. By Corollary~\ref{cor:calcohequiv},
this is equivalent to $\mathcal{H}_0^c =
\mathcal{D}$, so that $\mathcal{H}_0 \cup
\mathcal{D} = \mathcal{P}$ and no
region of $\mathcal{P}$ is left
uncharacterised. A test that fails
this requirement is detection
incoherent, as unrestricted test.

A foundational requirement of test construction
(cf. Cox~\cite[pp.~31--32]{cox2006principles})
is that large values of the test statistic indicate departure from ${H}_0$.
The detection coherence requirement of an empty blind spot guarantees
that every two-sided violation of ${H}_0$ is detectable
(Corollary~\ref{cor:calcohequiv}).
Together, the foundational requirement and detection coherence ensure
consistency of the test against every violation of ${H}_0$: large values of the test statistic 
indicate departure from ${H}_0$, and for any departure from ${H}_0$, power grows
to~$1$ with sample size.

\subsection{Two partitions}

A test defined over a domain
$\mathcal{R}$ induces two partitions
of $\mathcal{R}$. The hypothesis
partition $\{\mathcal{H}_0,
\mathcal{H}_0^c\}$ is determined by
the null hypothesis under which the
test is calibrated. The detection
partition $\{\mathcal{D}_0, \mathcal{D}\}$
is determined by the calibration
statistic; $\mathcal{D}_0$ is the set
of distributions in $\mathcal{R}$ at
which the detection functional takes
its null value $\theta^*$, and
$\mathcal{D} = \mathcal{R} \setminus
\mathcal{D}_0$ is its complement.

Detection coherence requires
$\mathcal{H}_0 = \mathcal{D}_0$ over
$\mathcal{R}$: the null and the
detection-null coincide. When they
do not -- $\mathcal{H}_0 \subsetneq
\mathcal{D}_0$ -- the excess
$\mathcal{D}_0 \setminus \mathcal{H}_0$
is the blind spot.

\subsection{The incoherent tests}

The four tests
examined (Wilcoxon--Mann--Whitney, Kruskal--Wallis,
Friedman, and logrank) are detection incoherent as
unrestricted tests: their null hypotheses
$\mathcal{H}_0$ are strictly smaller than their
detection-null sets $\mathcal{D}_0$, creating
non-empty blind spots.

\subsection{Consequences of detection incoherence}

\subsubsection*{When non-rejecting}
The blind spot distorts inference when an incoherent test does not reject.
The two-sided alternatives of a detection incoherent test partition 
into the detection set and the blind spot. A non-significant result from such a test
therefore has three possible sources: 
\textit{i)}~a true negative from $\mathcal{H}_0$, 
\textit{ii)} a false negative against a detection
set alternative $\mathcal{D}$, or
\textit{iii)} a false negative against a blind spot $\mathcal{S}_{\mathcal P}$ 
alternative -- \emph{missed discovery}. 

As $n \to \infty$, false negatives against the detection set vanish;
those against the blind spot -- missed discoveries -- persist.

\subsubsection*{When rejecting}
The blind spot distorts inference when an incoherent test rejects. A
significant result from such a test has three possible sources:
\textit{i)} a true positive from $\mathcal{D}$, 
\textit{ii)} a false positive against $\mathcal{H}_0$, or 
\textit{iii)} a true positive from the blind spot $\mathcal{S}_{\mathcal P}$
-- \emph{spurious discovery}.

As $n \to \infty$, the Type~I error rate stays at $\alpha$.
The spurious discovery persists at its own fixed rate $c(P^*, \lambda)$. 

The spurious discovery admits two readings. 
Relative to $\mathcal H_0$, it is a real discovery; 
relative to the detection-null set~$\mathcal D_0$, it is spurious.

\subsubsection*{Summary} 

Table~\ref{tab:inference} summarizes inferential consequences of detection
incoherence.

For a coherent test, $\mathcal{H}_0^c = \mathcal{D}$, and the classical
$2\times2$ confusion matrix applies directly: a true negative or false
positive at rate $1-\alpha$ or $\alpha$ when $P\in\mathcal{H}_0$; a false
negative or true positive at rate $\beta(P)$ or $1-\beta(P)$ when
$P\in\mathcal{D}$, with $\beta(P)\to0$ as $n\to\infty$ for any fixed $P$
(consistency).
 
For an incoherent test, $\mathcal{H}_0^c = \mathcal{D}\cup\mathcal{S}_{\mathcal
P}$, a disjoint union, and the alternative side of the confusion matrix
splits into two behaviorally distinct cases. The null side is unaffected
by incoherence.

\renewcommand{\arraystretch}{1.15}
\begin{table}[H]
\smallskip
\begin{center}
\begin{tabular}{lllll}
\hline
True state & Outcome & Rate & As $n\to\infty$ & \\
\hline
$P\in\mathcal{H}_0$ & true negative & $1-\alpha$ & fixed at $1-\alpha$ & \\
$P\in\mathcal{H}_0$ & false positive & $\alpha$ & fixed at $\alpha$ & \\
$P\in\mathcal{D}$ & false negative & $\beta(P)$ & $\to 0$ & \\
$P\in\mathcal{D}$ & true positive & $1-\beta(P)$ & $\to 1$ & \\
$P^*\in\mathcal{S}_{\mathcal P}$ & false negative & $1-c(P^*,\lambda)$ & fixed -- \emph{missed discovery} & \\
$P^*\in\mathcal{S}_{\mathcal P}$ & true positive & $c(P^*,\lambda)$ & fixed -- \emph{spurious discovery} & \\
\hline
\end{tabular}
\end{center}
\caption{Inferential consequences of detection incoherence.}
\label{tab:inference}
\end{table}
\renewcommand{\arraystretch}{1}

The missed discovery and the spurious discovery are
permanent: no increase in sample size recovers the undetected departure, 
nor changes the rate of spurious discoveries.

Thus, using WMW to test $F_1 = F_2$, KW to test $F_1 = \cdots = F_K$,
Friedman to test $F_{i1} = \cdots = F_{ik}$, and logrank to test $S_1(t) = S_2(t)$ 
should be abandoned as unrestricted tests, due to their non-empty blind spots.

\subsection{Restriction}

A domain restriction changes the
reference space from $\mathcal{P}$ to
$\mathcal{R} \subsetneq \mathcal{P}$;
within the model $\mathcal{R}$ the test may be
coherent: $\mathcal{H}_0 =
\mathcal{D}_0$ over $\mathcal{R}$.
Using any test
under a model restriction introduces
well-known difficulties: the model
must be verified from the data, any
such verification is uncertain at
finite sample sizes, and the combined
verify-then-test procedure has
uncontrolled size. These apply
equally to coherent and incoherent
tests and are not further discussed
here.

For detection incoherent tests,
restriction introduces two additional
problems specific to incoherence.

First, the named model --
location-shift, proportional hazards,
block-consistent stochastic ordering
-- is a sufficient condition for
$\mathcal{H}_0 = \mathcal{D}_0$ over
$\mathcal{R}$, not a characterisation
of the maximal coherence-preserving
domain $\mathcal{R}^*$, defined as
the largest domain over which
$\mathcal{H}_0 = \mathcal{D}_0$,
i.e.\ $\{P\mathpunct{:}
\theta(P)\!=\!\theta^*\!\Rightarrow\!
P\!\in\!\mathcal{H}_0\}$. Outside
the named model but inside
$\mathcal{R}^*$, the test remains
coherent; outside $\mathcal{R}^*$,
the blind spot is non-empty and
$\mathcal{H}_0 \subsetneq \mathcal{D}_0$
over $\mathcal{P}$. The distance
from the named model to the boundary
of $\mathcal{R}^*$ is unknown and
unverifiable.

Second, near the boundary of
$\mathcal{R}^*$, $\theta(P) \approx
\theta^*$: the detection functional
takes a value close to its null
value despite $P \notin \mathcal{H}_0$.
Power against such near-boundary
distributions is low and with practical sample sizes
grows slowly.  A study adequately
powered for a clinically meaningful
effect may be entirely underpowered
for a near-boundary alternative of
equal substantive relevance.

These two considerations compound
rather than cancel. A test should
be used over a domain where
$\mathcal{H}_0 = \mathcal{D}_0$
without requiring any model
restriction. Detection incoherent
tests should be abandoned, not
salvaged through restrictions.

\subsection{Scope}

Every nonparametric test in current use should be
assessed for detection coherence; the seven tests
examined here are among the most widely used, but
they are not the only candidates.

The present framework provides the theoretical
foundation that explains why calibration under the
detection-null set is not merely a technical
refinement but a necessary condition for coherent
inference. It is thus relevant also for the ongoing
broader programme of rank-based inference for
factorial designs and general nonparametric settings
(cf.~\cite{brunner2017rank, brunner2018rank,
chung2013exact}, among others).

  %
\section{Historical Note}
\label{sec:history}

The problem the framework addresses was first
identified, for the WMW test specifically, by~Gnedenko~\cite{gnedenko1958}.

\subsection{Gnedenko's recognition of the WMW test flaw}
\label{sec:history-wmw}

Gnedenko~\cite{gnedenko1958} constructed two distributions
with $F_1\!\neq\!F_2$ and $\mathrm{AUC}\!=\!1/2$,
and demonstrated that the WMW test erroneously accepts ${H}_0\mathpunct{:} F_1\!=\!F_2$
with probability tending to~$1$. He closed with a practical warning
(\emph{translated from Russian}): ``The example given above
shows that if, in practical applications,
the Wilcoxon criterion gives a result of coincidence of distributions,
then this conclusion should be treated with doubt.''

Zaremba~\cite{zaremba1962} analysed a simplified form of
Gnedenko's example: $P[X\!=\!0]=1$ and
$P[Y\!=\!r]=P[Y\!=\!s]\!=\!1/2$ with $rs\!<\!0$.
Direct calculation gives ${E}\,U\!=\!n_1 n_2/2$ ($\mathrm{AUC}\!=\!1/2$,
blind spot confirmed) and $\mathrm{Var}(U)\!=\!n_1^2 n_2/4$
(true variance at the blind spot distribution).
The ratio $d^2$ of the true to tie-corrected null variance approaches
$16/9$ for equal sample sizes, giving power $c(P^*,\lambda)\!\approx\!0.14$,
non-growing with sample size, the defining characteristic of the blind spot.
Gnedenko's conclusion that power converges to $0$ corresponds to the
regime $n_1$ fixed, $n_2\!\to\!\infty$, in which $d^2\!\to\!0$.

\subsection{The Coherent Reformulation: Zaremba}
\label{sec:history_zaremba}

Zaremba~\cite{zaremba1962}, after referring to Gnedenko's example,
calibrated the WMW statistic (eAUC) under
${H}_0^*\mathpunct{:} \mathrm{AUC}\!=\!1/2$ rather than ${H}_0\mathpunct{:} F_1\!=\!F_2$,
deriving the variance of the WMW statistic consistently under ${H}_0^*$
without assuming $F_1\!=\!F_2$.

\subsection{The Behrens-Fisher problem: Fligner and Policello, Brunner and Munzel}
\label{sec:history_bm}

Fligner and Policello~\cite{fligner1981}, under assumption of symmetry of distributions,
and, later,  Brunner and
Munzel~\cite{brunner2000nonparametric}, in full generality, each arrived at
a coherent test of $H_0^*\colon
\mathrm{AUC} = 1/2$ by addressing
the Behrens--Fisher problem.

%
%
\appendix

\section{Verification of Detector Properties}
\label{App:detector}

For each of the four tests examined in
Section~\ref{Sect:major}, we apply the extraction procedure of
Definition~\ref{Def:extraction} and verify that the extracted
quantity satisfies the consistency and minimality conditions of
Definition~\ref{Def:detection}.

The extraction procedure is a backward engineering
operation: given the calibration statistic~$T_n$,
which converges in distribution under $H_0$, it
recovers the underlying detector $\hat{D}_n$ whose
convergence in probability to the detection functional
$\theta(P)$ drives that convergence in distribution.
For ratio statistics (WMW, logrank) this backward
engineering is mechanical; for quadratic form statistics
(KW, Friedman) it requires identifying the minimal
vector pre-image, guided by the consistency check.

Throughout this appendix, we work with representations
of the calibration statistic that make the sample size
scaling explicit, exploiting the fact that the calibration
statistic is defined only up to monotone rescaling
(Remark~\ref{Rem:monotone}).

\subsection*{General Minimality Lemma}

\begin{lemma}[Minimality via non-degenerate range]
\label{Lem:minimality}
Let $\hat{D}_n\!\xrightarrow{p}\!\theta(P)$ and let
$\Theta\!=\!\{\theta(P)\mathpunct{:} P\!\in\!\mathcal{P}\}$ denote the
attainable range of the detection functional. If $\Theta$
is not a singleton, then no non-injective measurable
function $f\mathpunct{:} \Theta\!\to\!\mathbb{R}$ satisfies
$f(\hat{D}_n)\!\xrightarrow{p}\!\theta(P)$ for all $P$.
\end{lemma}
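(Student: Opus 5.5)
The plan is to argue by contradiction, using the continuous mapping theorem to show that any $f$ which recovers $\theta(P)$ from $\hat D_n$ must act as the identity on $\Theta$. The hypothesis that $\Theta$ is not a singleton is what makes the statement non-vacuous: a non-injective $f$ on $\Theta$ needs two distinct points of $\Theta$ with a common image, and for the same reason a singleton $\Theta$ admits no such $f$ at all.

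\textbf{Step 1 (setup).} Suppose, for contradiction, that $f\colon\Theta\to\mathbb{R}$ (or $\mathbb{R}^d$) is non-injective and $f(\hat D_n)\xrightarrow{p}\theta(P)$ for all $P$. Non-injectivity gives $a\neq b$ in $\Theta$ with $f(a)=f(b)$. By the definition of $\Theta$ as the attainable range, choose $P_a,P_b\in\mathcal P$ with $\theta(P_a)=a$ and $\theta(P_b)=b$.

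\textbf{Step 2 (identify the limits).} Under $P_a$ we have $\hat D_n\xrightarrow{p}a$. If $f$ is continuous at $a$, the continuous mapping theorem gives $f(\hat D_n)\xrightarrow{p}f(a)$. The assumed recovery property gives $f(\hat D_n)\xrightarrow{p}\theta(P_a)=a$. Uniqueness of limits in probability then forces $f(a)=a$. The same argument under $P_b$ gives $f(b)=b$.

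\textbf{Step 3 (contradiction).} Combining Step 2 with $f(a)=f(b)$ yields $a=f(a)=f(b)=b$, contradicting $a\neq b$. Hence no such $f$ exists. This is exactly the minimality condition (ii) of Definition~\ref{Def:detection}, which is the form in which the lemma will be applied to the four tests.

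\textbf{Main obstacle.} The delicate step is Step 2, which needs continuity of $f$ at the points of $\Theta$; measurability alone does not allow passing $f$ through a limit in probability.
\begin{itemize}
\item \emph{Why the gap is real.} If $\hat D_n$ has a continuous distribution, it almost surely never equals $\theta(P)$ exactly. A merely measurable $f$ could then be redefined on a neighbourhood of $\Theta$ in an essentially arbitrary way.
\item \emph{First fix.} I would read the lemma as concerning functions $f$ continuous at each point of $\Theta$. This is the natural class in Definition~\ref{Def:detection}(ii), since $f(\hat D_n)$ is meant to be a coarsening that still recovers $\theta(P)$ through convergence. I would state this restriction explicitly.
\item \emph{Possible strengthening.} One could add the condition that $\hat D_n$ takes values in $\Theta$ with $\hat D_n=\theta(P)$ attained with probability tending to one. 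I do not expect that to hold for the rank-based detectors, so I would not rely on it.
\item \emph{Alternative fallback.} Interpret $f(\hat D_n)\xrightarrow{p}\theta(P)$ as convergence through $f$ evaluated at the limit, so that the requirement becomes $f(\theta)=\theta$ on $\Theta$. Injectivity is then immediate, because $f$ is the identity on $\Theta$.
\end{itemize}
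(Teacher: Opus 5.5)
Your argument is the paper's argument. The paper also takes $a\neq b$ in $\Theta$ with $f(a)=f(b)$ and picks $P_1,P_2$ with $\theta(P_1)=a$, $\theta(P_2)=b$. It then notes that $f(\hat D_n)$ would converge to $f(a)$ under both, yet would have to converge to $a$ under $P_1$ and to $b$ under $P_2$. Your Steps~2--3 give the same contradiction, phrased via $f(a)=a$ and $f(b)=b$. The only difference is that the paper simply writes ``Under $P_1$: $f(\hat D_n)\xrightarrow{p}f(a)$'' for a merely measurable $f$. That is, it silently uses the continuous mapping theorem, which is exactly the step you single out.

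Your worry is justified. The lemma is in fact false as literally stated, so the continuity restriction is necessary, not merely convenient. For WMW, $\Theta=[0,1]$ and $\mathrm{eAUC}=U/(n_1n_2)$ is always rational. Take the Borel function $f(x)=x\,\mathbf{1}_{\mathbb{Q}}(x)$ on $[0,1]$. It is non-injective, since $f(0)=f(1/\sqrt2)=0$. Yet $f(\mathrm{eAUC})=\mathrm{eAUC}$ identically, so $f(\hat D_n)\xrightarrow{p}\theta(P)$ whenever $\hat D_n$ does.

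The precise form of your first bullet is that $f$ may be altered on any set the detector avoids almost surely, not ``on a neighbourhood of $\Theta$.'' Require $f$ to be continuous at each point of $\Theta$ and defined on the range of $\hat D_n$; then your Steps~1--3 are a complete proof. Your ``alternative fallback'' is precisely the reading the paper adopts implicitly, and the ``strengthening'' bullet is unnecessary. Finally, Step~2 shows that every such $f$ is the identity on $\Theta$. So, as you observed, the non-singleton hypothesis plays no role in the argument.
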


\begin{proof}
Since $f$ is non-injective, there exist $a \neq b$ in
$\Theta$ with $f(a) = f(b)$. Since $\theta(P)$ attains
all values in $\Theta$, there exist $P_1, P_2 \in
\mathcal{P}$ with $\theta(P_1) = a$ and $\theta(P_2) = b$.
Under $P_1$: $f(\hat{D}_n) \xrightarrow{p} f(a)$. Under
$P_2$: $f(\hat{D}_n) \xrightarrow{p} f(b) = f(a)$. Since
$\theta(P_1) = a \neq b = \theta(P_2)$, $f(\hat{D}_n)$
cannot consistently estimate $\theta(P)$ for both $P_1$
and $P_2$ simultaneously. Contradiction.
\end{proof}

The condition is satisfied trivially for all four tests
since in each case $\Theta$ is not a singleton:
\begin{itemize}
  \item {WMW:} $\Theta = [0,1]$
  \item {KW:} not a singleton
  \item {Friedman:} $\Theta = \{(\mu_1,\ldots,\mu_k)\mathpunct{:} \sum_j \mu_j = k(k+1)/2\}$
  \item {Logrank:} $\Theta = \mathbb{R}$
\end{itemize}

\subsection*{Wilcoxon--Mann--Whitney test}

{Conventional calibration statistic:}
$Z_{\mathrm{WMW}} = \sqrt{n}(\mathrm{eAUC} - 1/2)/\hat{\sigma}$,
where $\mathrm{eAUC} = U/(n_1 n_2)$ and
$U = \sum_{i=1}^{n_1}\sum_{j=1}^{n_2}{1}(X_i < Y_j)$,
asymptotically $N(0,1)$ under $H_0\mathpunct{:} F_1 = F_2$.

{Convenience version:} Same (the scaling $\sqrt{n}$
is already explicit). $Z_{\mathrm{WMW}}$ is a ratio statistic:
$$
Z_{\mathrm{WMW}} = \sqrt{n} \cdot \frac{\mathrm{eAUC} - 1/2}
{\hat{\sigma}}
$$

{Step 1 (Remove scaling):} Remove $a_n = \sqrt{n}$:
$$\frac{\mathrm{eAUC} - 1/2}{\hat{\sigma}}$$

{Step 2 (Remove variance normalization):} Remove
$\hat{\sigma}$, consistently estimating $\sigma(P)$, the asymptotic
standard deviation of $\sqrt{n}(\mathrm{eAUC} - \mathrm{AUC})$:
$$\mathrm{eAUC} - 1/2$$

{Step 3 (Remove nuisance):}
The additive constant $1/2$ is absorbed into the detection-null value
and is removed, leaving $\mathrm{eAUC}$.

{Step 4 (Minimal consistent estimator):}
The remaining quantity is $\mathrm{eAUC}$.

{Consistency check:} By $U$-statistic
theory~\cite{hoeffding1948class}:
$$\mathrm{eAUC} \xrightarrow{p} \mathrm{AUC} = P(X < Y) + \frac{1}{2}P(X = Y)$$
Consistency passes.

{Detector:} $\hat{D}_n = \mathrm{eAUC}$,
consistently estimating the detection functional
$\theta(F_1,F_2) = \mathrm{AUC}$.

{Minimality:} Since $\Theta = [0,1]$ is not a
singleton, minimality follows from
Lemma~\ref{Lem:minimality}.

{Recovery of calibration statistic:}
$$Z_{\mathrm{WMW}} = \frac{\sqrt{n}(\mathrm{eAUC} - 1/2)}
{\hat{\sigma}} + o_p(1)$$
where $\hat{\sigma}^2$ consistently estimates the
asymptotic variance of $\sqrt{n}(\mathrm{eAUC} -
\mathrm{AUC})$ by $U$-statistic
theory~\cite{hoeffding1948class}.

  \subsection*{Kruskal--Wallis test}

{Conventional calibration statistic:}
$H = \frac{12}{n(n+1)}\sum_{k=1}^K n_k
(\bar{R}_k - (n+1)/2)^2$,
asymptotically $\chi^2_{K-1}$ under $H_0$.

{Structure:} Unlike WMW and logrank, $H$ is a
quadratic form (not a ratio statistic). The extraction
procedure identifies the minimal vector pre-image of
the quadratic form, guided by the consistency check.

{Convenience version:} $H/n$, with explicit
sample size scaling $n$.

{Step 1 (Remove scaling):} Remove $n$:
$$\frac{H}{n} = \frac{12}{n+1}\sum_{k=1}^K \frac{n_k}{n}
\left(\bar{R}_k - \frac{n+1}{2}\right)^2$$

{Step 2 (Remove variance normalization):} Remove
$12/(n+1)$, and rewrite to make the rank normalization
explicit:
$$
\sum_{k=1}^K \frac{n_k}{n}
\left(\bar{R}_k - \frac{n+1}{2}\right)^2 =
  (n+1)^2\sum_{k=1}^K \frac{n_k}{n}
\left(\frac{\bar{R}_k}{n+1} - \frac{1}{2}\right)^2
$$

{Step 3 (Remove nuisance and identify vector
pre-image):} Three nuisance components are identified
and removed:
\begin{enumerate}[(i)]
  \item $(n+1)^2$, scaling term, remove
  \item $n_k/n$, weights with plim $\lambda_k =
        \lim n_k/n$, constant on $\mathcal{H}_0$, remove
  \item $-1/2$, constant, trivially constant on
        $\mathcal{H}_0$, remove
\end{enumerate}
The quadratic form collapses the vector
$(\bar{R}_1/(n+1),\ldots,\bar{R}_K/(n+1))$ to a scalar
non-injectively. The vector is the minimal pre-image.

{Step 4 (Minimal consistent estimator):}
The candidate detector is
$(\bar{R}_1/(n+1),\ldots,\bar{R}_K/(n+1))$.

{Consistency check:} By the law of large
numbers~\cite{lehmann1975nonparametrics}:
$$
\frac{\bar{R}_k}{n+1} \xrightarrow{p} p_k =
  P(X_k < X) + \tfrac{1}{2}P(X_k = X)
$$
for $X$ from the pooled distribution.
Consistency passes.

{Detector:} $\hat{D}_n =
(\bar{R}_1/(n+1),\ldots,\bar{R}_K/(n+1))$,
vector-valued, consistently estimating the detection
functional $\theta(F_1,\ldots,F_K) = (p_1,\ldots,p_K)$.

{Minimality:} $\Theta$ is not a singleton;
minimality follows from Lemma~\ref{Lem:minimality}.

{Detection-null set coincidence:}
Let $\hat{\varepsilon}^2 = H/(n-1)$ denote the
epsilon-squared statistic. Then:
$$\hat{\varepsilon}^2 = 0 \iff \bar{R}_k = (n+1)/2
\ \forall k \iff \frac{\bar{R}_k}{n+1} = \frac{1}{2}
\ \forall k \iff \hat{D}_n = \left(\frac{1}{2},\ldots,
                                   \frac{1}{2}\right) = \theta^*
$$
The detection-null sets of $\hat{\varepsilon}^2$ and
the vector detector coincide, so the blind spot
characterization is unaffected. Moreover,
$\hat{\varepsilon}^2$ is not a minimal detector since
it is a non-injective function of $\hat{D}_n$.

{Recovery of calibration statistic:}
$$
H = \Phi(\hat{D}_n) = \frac{12(n+1)}{n}\sum_{k=1}^K n_k
\left(\hat{D}_{nk} - \frac{1}{2}\right)^2
$$
where $\Phi$ is the quadratic form mapping the vector
detector to the scalar calibration statistic $H$.

\subsection*{Friedman test}

{Conventional calibration statistic:}
$Q = \frac{12n}{k(k+1)}\sum_{j=1}^k
(\bar{R}_{\cdot j} - (k+1)/2)^2$,
asymptotically $\chi^2_{k-1}$ under $H_0$.

{Structure:} Like KW, $Q$ is a quadratic form.
Unlike KW, within-block ranks $\bar{R}_{\cdot j} \in
[1,k]$ are bounded (of order $O_P(1)$ rather than
$O_P(n)$), making the extraction cleaner.

{Convenience version:} $Q/n$, with explicit
sample size scaling $n$.

{Step 1 (Remove scaling):} Remove $n$:
$$
\frac{Q}{n} = \frac{12}{k(k+1)}\sum_{j=1}^k
\left(\bar{R}_{\cdot j} - \frac{k+1}{2}\right)^2
$$

{Step 2 (Remove variance normalization):} Remove $12/k(k+1)$ (inverse of within-block rank variance):
$$\sum_{j=1}^k\left(\bar{R}_{\cdot j} - \frac{k+1}{2}\right)^2
$$

{Step 3 (Remove nuisance and identify vector pre-image):} Two nuisance components are identified
and removed:
\begin{enumerate}[(i)]
\item $(k+1)/2$, constant, trivially constant on
$\mathcal{H}_0$, remove
\item The quadratic form (non-injective), collapses
vector to scalar; the vector
$(\bar{R}_{\cdot 1},\ldots,\bar{R}_{\cdot k})$
is the minimal pre-image
\end{enumerate}
Leaving $(\bar{R}_{\cdot 1},\ldots,\bar{R}_{\cdot k})$.

{Step 4 (Minimal consistent estimator):}
The candidate detector is
$(\bar{R}_{\cdot 1},\ldots,\bar{R}_{\cdot k})$.

{Consistency check:} By the law of large numbers
applied within blocks:
$$
\bar{R}_{\cdot j} \xrightarrow{p} \mu_j = E[R_{ij}]
\in [1,k]
$$
Consistency passes. Note that no normalization is
needed since $\bar{R}_{\cdot j} = O_P(1)$; in contrast
to KW where $\bar{R}_k = O_P(n)$ required normalization
by $n+1$.

{Detector:} $\hat{D}_n = (\bar{R}_{\cdot 1},
                          \ldots,\bar{R}_{\cdot k})$, vector-valued, consistently
estimating the detection functional
$\theta(P) = (\mu_1,\ldots,\mu_k)$.

{Minimality:} Since $\Theta$ is not a singleton,
minimality follows from Lemma~\ref{Lem:minimality}.

{Detection-null set coincidence:}
Let $\hat{W} = Q/(n(k-1))$ denote Kendall's $W$
statistic. Then:
$$\hat{W} = 0 \iff \bar{R}_{\cdot j} = (k+1)/2
\ \forall j \iff \mu_j = (k+1)/2 \ \forall j
\iff \hat{D}_n = \left(\frac{k+1}{2},\ldots,
\frac{k+1}{2}\right) = \theta^*$$
The detection-null sets of $\hat{W}$ and the vector
detector coincide, so the blind spot characterization
is unaffected. Moreover, $\hat{W}$ is not a minimal
detector since it is a non-injective function of
$\hat{D}_n$.

{Recovery of calibration statistic:}
$$Q = \Phi(\hat{D}_n) = \frac{12n}{k(k+1)}\sum_{j=1}^k
\left(\hat{D}_{nj} - \frac{k+1}{2}\right)^2$$
where $\Phi$ is the quadratic form mapping the vector
detector to the scalar calibration statistic $Q$.

\subsection*{Logrank test}

{Conventional calibration statistic:}
$Z = U_n/\sqrt{V_n}$, where
$U_n = \sum_j(d_{1j} - E_{1j})$,
(signed score form; the equivalent chi-squared form
$\sum_g(O_g-E_g)^2/E_g \approx Z^2$ defines the same
two-sided test),
asymptotically $N(0,1)$ under $H_0\mathpunct{:} S_1 = S_2$.

{Convenience version:} The logrank statistic is a
ratio statistic with implicit scaling. Rewriting to make
$a_n = \sqrt{n}$ explicit:
$$Z = \sqrt{n} \cdot \frac{U_n/n}{\sqrt{V_n/n}}$$

{Step 1 (Remove scaling):} Remove $a_n = \sqrt{n}$,
focusing on $(U_n/n)/\sqrt{V_n/n}$.

{Step 2 (Remove variance normalization):} Remove
$\sqrt{V_n/n}$, focusing on $U_n/n$.

{Step 3 (Remove nuisance):} In counting process
notation, $U_n/n$ can be written as:
$$\frac{U_n}{n} = \frac{1}{n}\int \frac{Y_1(t)Y_2(t)}
{Y(t)}\left(\frac{dN_1(t)}{Y_1(t)} -
\frac{dN_2(t)}{Y_2(t)}\right)$$
No nuisance components with constant probability limits
on $\mathcal{H}_0$ remain, thus $U_n/n$ directly estimates
the detection functional.

{Step 4 (Minimal consistent estimator):}
The remaining quantity is $U_n/n$.

{Consistency check:} By the law of large numbers
for counting processes~\cite{andersen2012statistical}:
$$\frac{U_n}{n} \xrightarrow{p} \Delta = \int w(t)
[\lambda_1(t) - \lambda_2(t)]S(t)\,dt$$
where $w(t) = y_1(t)y_2(t)/(y_1(t)+y_2(t))$ and
$y_g(t) = \lim_{n\to\infty} Y_g(t)/n$.
Consistency passes. Note that $U_n$ itself diverges
at rate $n$; it is $U_n/n$ that converges to the
fixed population quantity $\Delta$.

{Detector:} $\hat{D}_n = U_n/n$, consistently
estimating the detection functional
$\theta(S_1,S_2) = \Delta =
\int w(t)[\lambda_1(t)-\lambda_2(t)]S(t)\,dt$.

{Minimality:} Since $\Theta = \mathbb{R}$ is not
a singleton, minimality follows from
Lemma~\ref{Lem:minimality}.

{Recovery of calibration statistic:}
$$Z = \frac{\sqrt{n}(U_n/n)}{\sqrt{V_n/n}} + o_p(1)$$
where $\sqrt{V_n/n}$ consistently estimates the
asymptotic standard deviation of $\sqrt{n}(U_n/n - \Delta)$
by martingale theory~\cite{andersen2012statistical}.

\begin{note}[Logrank detector under independent censoring]
\label{Note:logrank-censoring}
The logrank test is designed for right-censored
time-to-event data.
The exposition in Section~\ref{Sect:major} treats
the uncensored case for notational simplicity, but the
coherence verdict extends unchanged to the censored
setting under independent censoring.

Under independent censoring, the at-risk process
$Y_g(t) = \sum_{i} {1}(T_{gi} \geq t)$ is replaced
by the observed at-risk count, which under independent
censoring satisfies $Y_g(t)/n_g \xrightarrow{p} \pi_g(t)$,
the probability of being at risk at time~$t$ under the
joint distribution of event and censoring times.
The detector remains
$$
\hat{D}_n = \frac{O - E}{n}
= \int_0^\infty h_n(t)
  \bigl\{d\hat{\Lambda}_1(t) - d\hat{\Lambda}_2(t)\bigr\},
$$
where $h_n(t) = Y_1(t)Y_2(t)/Y(t)$ is the at-risk weight,
and its probability limit under independent censoring is
$$
\theta(P) =
\int_0^\infty \pi(t)
\bigl\{\lambda_1(t) - \lambda_2(t)\bigr\} dt,
\qquad
\pi(t) = \frac{\pi_1(t)\pi_2(t)}{\pi_1(t)+\pi_2(t)},
$$
where $\lambda_g(t)$ are the cause-specific hazard functions
and $\pi_g(t)$ the at-risk probabilities under the joint
event-censoring distribution.
This limit vanishes if and only if
$\lambda_1(t) = \lambda_2(t)$ for almost all $t$ in the
support of $\pi(t)$ -- the same identification condition
as in the uncensored case, since $\pi(t) > 0$ on the
support of the observed data.

The detection-null set $\mathcal{D}_0 = \{P\mathpunct{:}
\theta(P)\!=\!0\}$ therefore coincides with
$\{P\mathpunct{:} \lambda_1\!=\!\lambda_2\ \pi\text{-a.e.}\}$,
which under standard regularity conditions
(no informative censoring, common support) is equivalent
to $S_1\!=\!S_2$ on the observation window.
The null $\mathcal{H}_0\mathpunct{:} S_1\!=\!S_2$ under which the test is calibrated
remains a strict subset of $\mathcal{D}_0$, and the blind
spot is non-empty under censoring for the same
crossing-hazard configurations as in the uncensored
case.
The coherence verdict, detection incoherence of the
logrank test, is unchanged by independent censoring.
\end{note}

%
\section*{Data availability statement}

R scripts reproducing the Monte Carlo studies in this paper are
available at \url{https://github.com/grendar/detection_coherence_mc}.

%
\section*{Acknowledgments}
Valuable discussions with Claude AI Sonnet 4.6 and 5 (Anthropic, 2026), 
Kimi AI K2.6 (Moonshot AI, 2026) and ChatGPT 5.6 Luna (OpenAI, 2026) which contributed to 
refining ideas presented in this work, are gratefully acknowledged.
After using these tools the author reviewed and edited the content as 
necessary and takes full responsibility for the content of the publication.

%
\bibliographystyle{amsplain}
\bibliography{references}

\end{document}